\newtheorem{definition}{Definition}
\newtheorem{proposition}[definition]{Proposition}
\newtheorem{lemma}[definition]{Lemma}
\newtheorem{theorem}[definition]{Theorem}
\newtheorem{corollary}[definition]{Corollary}
\newtheorem{conjecture}[definition]{Conjecture}
\newtheorem{remark}[definition]{Remark}
\newtheorem{example}[definition]{Example}
\newtheorem{question}[definition]{Question}
\def\squareforqed{\hbox{\rlap{$\sqcap$}$\sqcup$}}
\def\qed{\ifmmode\squareforqed\else{\unskip\nobreak\hfil
\penalty50\hskip1em\null\nobreak\hfil\squareforqed
\parfillskip=0pt\finalhyphendemerits=0\endgraf}\fi}
\def\endenv{\ifmmode\;\else{\unskip\nobreak\hfil
\penalty50\hskip1em\null\nobreak\hfil\;
\parfillskip=0pt\finalhyphendemerits=0\endgraf}\fi}
\newenvironment{proof}{\noindent \textbf{{Proof.~} }}{\qed}
\def\Dbar{\leavevmode\lower.6ex\hbox to 0pt
{\hskip-.23ex\accent"16\hss}D}
\def\url@leostyle{%
  \@ifundefined{selectfont}{\def\UrlFont{\sf}}{\def\UrlFont{\small\ttfamily}}}
\def\bcj{\begin{conjecture}}
\def\ecj{\end{conjecture}}
\def\bcr{\begin{corollary}}
\def\ecr{\end{corollary}}
\def\bd{\begin{definition}}
\def\ed{\end{definition}}
\def\bea{\begin{eqnarray}}
\def\eea{\end{eqnarray}}
\def\bem{\begin{enumerate}}
\def\eem{\end{enumerate}}
\def\bex{\begin{example}}
\def\eex{\end{example}}
\def\bim{\begin{itemize}}
\def\eim{\end{itemize}}
\def\bl{\begin{lemma}}
\def\el{\end{lemma}}
\def\bpf{\begin{proof}}
\def\epf{\end{proof}}
\def\bpp{\begin{proposition}}
\def\epp{\end{proposition}}
\def\bqu{\begin{question}}
\def\equ{\end{question}}
\def\br{\begin{remark}}
\def\er{\end{remark}}
\def\bt{\begin{theorem}}
\def\et{\end{theorem}}
\def\btb{\begin{tabular}}
\def\etb{\end{tabular}}
\newcommand{\nc}{\newcommand}
\def\a{\alpha}
\def\b{\beta}
\def\d{\delta}
\def\r{\rho}
\def\s{\sigma}
\def\ph{\varphi}
\def\ps{\psi}
 \nc{\bA}{{\bf A}} \nc{\bB}{{\bf B}} \nc{\bC}{{\bf C}}
 \nc{\bD}{{\bf D}} \nc{\bE}{{\bf E}} \nc{\bF}{{\bf F}}
 \nc{\bG}{{\bf G}} \nc{\bH}{{\bf H}} \nc{\bI}{{\bf I}}
 \nc{\bJ}{{\bf J}} \nc{\bK}{{\bf K}} \nc{\bL}{{\bf L}}
 \nc{\bM}{{\bf M}} \nc{\bN}{{\bf N}} \nc{\bO}{{\bf O}}
 \nc{\bP}{{\bf P}} \nc{\bQ}{{\bf Q}} \nc{\bR}{{\bf R}}
 \nc{\bS}{{\bf S}} \nc{\bT}{{\bf T}} \nc{\bU}{{\bf U}}
 \nc{\bV}{{\bf V}} \nc{\bW}{{\bf W}} \nc{\bX}{{\bf X}}
 \nc{\bZ}{{\bf Z}}
\nc{\cA}{{\cal A}} \nc{\cB}{{\cal B}} \nc{\cC}{{\cal C}}
\nc{\cD}{{\cal D}} \nc{\cE}{{\cal E}} \nc{\cF}{{\cal F}}
\nc{\cG}{{\cal G}} \nc{\cH}{{\cal H}} \nc{\cI}{{\cal I}}
\nc{\cJ}{{\cal J}} \nc{\cK}{{\cal K}} \nc{\cL}{{\cal L}}
\nc{\cM}{{\cal M}} \nc{\cN}{{\cal N}} \nc{\cO}{{\cal O}}
\nc{\cP}{{\cal P}} \nc{\cQ}{{\cal Q}} \nc{\cR}{{\cal R}}
\nc{\cS}{{\cal S}} \nc{\cT}{{\cal T}} \nc{\cU}{{\cal U}}
\nc{\cV}{{\cal V}} \nc{\cW}{{\cal W}} \nc{\cX}{{\cal X}}
\nc{\cZ}{{\cal Z}}
\nc{\hA}{{\hat{A}}} \nc{\hB}{{\hat{B}}} \nc{\hC}{{\hat{C}}}
\nc{\hD}{{\hat{D}}} \nc{\hE}{{\hat{E}}} \nc{\hF}{{\hat{F}}}
\nc{\hG}{{\hat{G}}} \nc{\hH}{{\hat{H}}} \nc{\hI}{{\hat{I}}}
\nc{\hJ}{{\hat{J}}} \nc{\hK}{{\hat{K}}} \nc{\hL}{{\hat{L}}}
\nc{\hM}{{\hat{M}}} \nc{\hN}{{\hat{N}}} \nc{\hO}{{\hat{O}}}
\nc{\hP}{{\hat{P}}} \nc{\hR}{{\hat{R}}} \nc{\hS}{{\hat{S}}}
\nc{\hT}{{\hat{T}}} \nc{\hU}{{\hat{U}}} \nc{\hV}{{\hat{V}}}
\nc{\hW}{{\hat{W}}} \nc{\hX}{{\hat{X}}} \nc{\hZ}{{\hat{Z}}}
\def\diag{\mathop{\rm diag}}
\def\dim{\mathop{\rm Dim}}
\def\lin{\mathop{\rm span}}
\def\tr{\mathop{\rm Tr}}
\def\GL{{\mbox{\rm GL}}}
\def\Un{{\mbox{\rm U}}}
\def\op{\oplus}
\def\ox{\otimes}
\def\sue{\subseteq}
\def\we{\wedge}
\newcommand{\bra}[1]{\langle#1|}
\newcommand{\ket}[1]{|#1\rangle}
\newcommand{\proj}[1]{| #1\rangle\!\langle #1 |}
\newcommand{\ketbra}[2]{|#1\rangle\!\langle#2|}
\newcommand{\braket}[2]{\langle#1|#2\rangle}
\newcommand{\wetw}[2]{|#1\rangle\wedge|#2\rangle}
\newcommand{\wefo}[4]{|#1\rangle\wedge|#2\rangle\wedge|#3\rangle\wedge|#4\rangle}
\def\Dbar{\leavevmode\lower.6ex\hbox to 0pt
{\hskip-.23ex\accent"16\hss}D}
\begin{document}

\title{Universal Subspaces for Local Unitary Groups of Fermionic Systems}

\author{Lin Chen$^{1,2,3}$, Jianxin Chen$^{2,4}$, Dragomir {\v{Z} \Dbar}okovi{\'c}$^{1,2}$, Bei Zeng$^{2,4}$\\
$1$. Department of Pure Mathematics,\\ University of Waterloo,
Waterloo, Ontario, Canada\\
$2$. Institute for Quantum Computing,\\ University of
Waterloo, Waterloo, Ontario, Canada\\
$3$. Center for Quantum Technologies,\\ National University of
Singapore, Singapore\\
$4$. Department of Mathematics \&
  Statistics,\\ University of Guelph,
  Guelph, Ontario, Canada}

\date{\today}
\maketitle

\begin{abstract}
Let $\mathcal{V}=\wedge^N V$ be the $N$-fermion Hilbert space with $M$-dimensional single particle space $V$ and $2N\le M$. We refer to the unitary group $G$ of $V$ as the local unitary (LU) group. We fix an orthonormal (o.n.) basis
$\ket{v_1},\ldots,\ket{v_M}$ of $V$. Then the Slater determinants $e_{i_1,\ldots,i_N}:=
\ket{v_{i_1}\we v_{i_2}\we\cdots\we v_{i_N}}$ with
$i_1<\cdots<i_N$ form an o.n. basis of $\cV$.
Let $\cS\subseteq\cV$ be the subspace spanned by all
$e_{i_1,\ldots,i_N}$ such that the set $\{i_1,\ldots,i_N\}$
contains no pair $\{2k-1,2k\}$, $k$ an integer.
We say that the $\ket{\psi}\in\cS$ are single occupancy states
(with respect to the basis $\ket{v_1},\ldots,\ket{v_M}$).
We prove that for $N=3$ the subspace $\cS$ is universal, i.e., each $G$-orbit in $\cV$ meets $\cS$, and
that this is false for $N>3$. If $M$ is even, the well known BCS states are not LU-equivalent to any single occupancy state.
Our main result is that for $N=3$ and $M$ even there is a universal subspace $\cW\subseteq\cS$ spanned by $M(M-1)(M-5)/6$ states $e_{i_1,\ldots,i_N}$. Moreover the number $M(M-1)(M-5)/6$ is minimal.
\end{abstract}

\section{Introduction}
\label{sec:Intro}

Entanglement is at the heart of quantum information theory. It makes
possible secure and high rate information transmission, fast
computational solution of certain important problems, and efficient
physical simulation of quantum phenomena~\cite{NC00,Horodecki}. The
most fundamental property for any kind of study of entanglement is
that it is invariant under local unitary (LU) transformations. The
celebrated Schmidt decomposition for bipartite pure states provides
a canonical form for these states under LU, which makes possible a
complete understanding of their entanglement properties. In
particular, the entanglement measure is characterized by the entropy
of Schmidt coefficients.

For multipartite systems, however, no direct generalization is
possible. The study of entanglement for multipartite pure states is
hence much more challenging than the bipartite case. Considerable
efforts have been taken during the past decade. However no complete
satisfactory theory can be reached as for the bipartite
case~\cite{Horodecki,AFOV08}. Still, the first step toward the
understanding of multipartite entanglement is to study the orbits of
pure quantum states under LU. Acin et al started from looking at the
simplest nontrivial case of a three qubit system~\cite{AAC+00,ajt01}. They
introduced the concept of local basis product states (LBPS), which
are three-qubit product states from a fixed set of single particle
orthonormal basis. They studied the property of subspaces spanned by
LBPS such that every three-qubit pure state is LU equivalent to some
states in such a subspace. This kind of subspace can hence be called
`universal subspace'~\cite{AAC+00,ad10}.

Acin et al. showed that five LBPS are enough to span a universal
subspace for three qubits~\cite{AAC+00}. For one such choice given by
$\{\ket{000},\ket{100},\ket{101},\ket{110},\ket{111}\}$, they can
further restrict the regions for the superposition coefficients such
that it gives a canonical form for three qubits~\cite{ad10}. That is, each LU
orbit corresponds to only one set of parameters in the canonical
form. Generalizations to $N>3$ qubits are developed and it is found
that a universal subspace needs to be spanned by $2^N-N$ LBPS, and
such a minimal set of LBPS are identified~\cite{CHS00}.
Unfortunately, this does not provide much saving compared to the
$2^N$-dimensional Hilbert space. This is simply due to the fact the
LU group is such a small group compared to the entire unitary group
$\Un(2^N)$.

Entanglement theory for identical particle systems has also been
considerably developed during the past decade~\cite{AFOV08,
SLM01,SCK+01,LZL+01,PY01,ESB+02}. Entanglement for identical
particle systems cannot be discussed in usual way as for the
distinguishable particle case. This is because the symmetrization
for bosonic systems and antisymmetrization for fermionic systems for
the wave functions may introduce `pseudo entanglement' which is not
accessible in practice. It is now widely agreed that non-entangled
states are reasonably corresponding to the form $\ket{v}^{ N}$ for
bosonic states~\cite{PY01,AFOV08, ESB+02} (there was indeed
suggestion that non-entangled bosonic state corresponds to
$\ket{v_1}\vee\ket{v_2}\vee\cdots\vee\ket{v_N}$~\cite{LZL+01},
however most of the literatures accept $\ket{v}^{ N}$) and to the
form of Slater determinants
$\ket{v_1}\wedge\ket{v_2}\we\cdots\wedge\ket{v_N}$ for fermionic
states~\cite{AFOV08, ESB+02}.

For bipartite fermionic pure states, a direct generalization of
Schmidt decomposition is available~\cite{SCK+01}. That is, every
bipartite fermionic pure state is LU equivalent to a Slater
decomposition $\sum_i\lambda_i \ket{\alpha_i}\wedge\ket{\beta_i}$,
where
$\bra{\alpha_i}\a_j\rangle=\bra{\b_i}\beta_j\rangle=\delta_{ij}$
and $\bra{\alpha_i}\beta_j\rangle=0$.
This thus allows a complete understanding of entanglement properties
for bipartite fermionic pure states. Many interesting analogues to
the bipartite distinguishable particle case have also been
identified, such as concurrence and magic
basis~\cite{SCK+01,ESB+02}. The entanglement of formation (EOF) for
mixed states has also been investigated for the single particle
state Hilbert space of dimension four. In particular, a formula was
identified~\cite{SCK+01,ESB+02} and is similar to the known formula for
two-qubit states by Wootters \cite{wootters98}.

The generalization to more than two fermions has also been
discussed~\cite{ESB+02}. However similar to the distinguishable
particle case, there is no Slater decomposition for multipartite
fermioinic system. Then it is more difficult to study their
entanglement properties. The first step is still using the LU
orbits. Due to particle-hole duality, 
in most cases one will need to have $M\geq 2N$ to obtain meaningful
discussions, where $M$ is the dimension for the single particle state
Hilbert space $V$. For examples with $M >2N$, see the BCS states
defined in Eq. \eqref{ea:BCS}. The reason is that the Pauli
exclusion principle requires $M\geq N$, and the particle-hole
duality gives equivalence between $N,M$ and $M-N, M$ systems. Also,
it is even nontrivial to see whether a given fermionic pure state is
unentangled (i.e. LU equivalent to a single Slater determinant),
where one needs to check the Grassmann-Pl\"{u}cker relations (see
e.g. p. A III.172 Eq. (84-(J,H)) in~\cite{nb1970}, Prop 11-32
in~\cite{bh07}, and~\cite{ESB+02}).

In this paper, we take one further step beyond just the Grassmann-
Pl\"{u}cker relations for the entanglement properties of
multipartite fermionic pure states. We study the universal subspaces
spanned by Slater determinants built from orthonormal local basis.
So we can build a fermionic analogue of the universal subspace
spanned by LBPS in the distinguishable particle case. The obvious
difficulty is the above-mentioned $M\geq 2N$ condition, for which
one can no longer hope for any elementary method (e.g. linear
algebra) as was useful in the $N$-qubit case. However, this
difficulty, once overcome, might become an advantage given the
relatively large LU group $\Un(M)$ compared to the entire unitary
group on the $N$-particle fermion space $\mathcal{V}=\wedge^N V$
which is of dimension $M\choose N$. In other words, one can ideally
hope for a `saving' of the dimension of $\Un(M)$, which is of the
order $M^2$. That is, the best one can hope for the dimension of a
universal subspace is of the order ${M\choose N} -M^2$ for $N>2$.

Our focus will be mainly on the $N=3$ case, where we show that such
a saving of order $M^2$ is indeed achievable. Not only we want to
find a universal subspace which mathematically achieves the order
${M\choose N} -M^2$, but also we want the universal subspace with a
clear physical meaning. To obtain such a universal subspace, we
introduce a configuration for the Slater determinants called single
occupancy. This is typical in condensed matter physics for studying
properties of strongly-correlated electron systems, for instance
high temperature superconducting ground state of
doped Mott insulator (see e.g. ~\cite{Weng11}).

Single occupancy fermionic states are also of interest in quantum
information theory in recent years, as there exists a qubit to
fermion mapping which is one to one between qubit states and single
occupancy fermionic states. This mapping was used to show the QMA
completeness of the fermionic N-representability problem, that is,
the overlapping N-representability problem is hard even with the
existence of a quantum computer~\cite{Liu,ak08}. The same mapping
was also used to disprove a 40-year old conjecture in quantum
chemistry using methods of quantum error-correcting
codes~\cite{OCZ+11}.

The physics considerations lead to the following picture. We
illustrate this in Fig.~\ref{fig:sites}. Suppose we have in total
$K$ sites. At each site we have three possible states: 1) no spin (a
hole, shown as a dot); 2) spin up (shown as arrow up); 3) spin down
(shown as arrow down). Now suppose we have in total $N$ spins. Then
there are in total $2K\choose N$ spin configurations (note that once
the spin states are fixed, the holes are fixed). As an example,
Fig.~\ref{fig:4sites} shows the case $N=3$ and $K=4$, i.e. $3$ spins
in $8$-dimensional space. Figure A is a single occupancy
configuration which has at most $1$ spin on each site. A single
occupancy state is a superposition of single occupancy
configurations. Figure B is not a single occupancy configuration
where site $1$ is occupied by $2$ spins (i.e. double occupancy on
site $1$).  While the single occupancy configuration is defined for
$M=2K$ which is even, one can readily generalize the configuration
to the case of odd $M$, leaving one single-particle spin state
unpaired.

\begin{figure}[htb]
\centering
\includegraphics[scale=0.7]{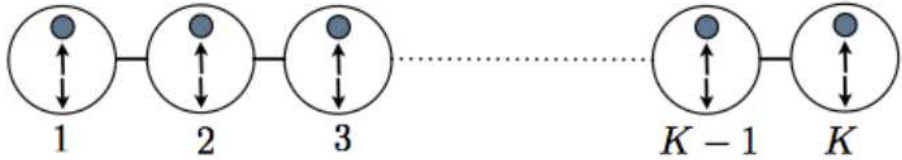}
\caption{Physics picture for single and double occupancy}
\label{fig:sites}
\end{figure}

\begin{figure}[htb]
\centering
\includegraphics[scale=0.6]{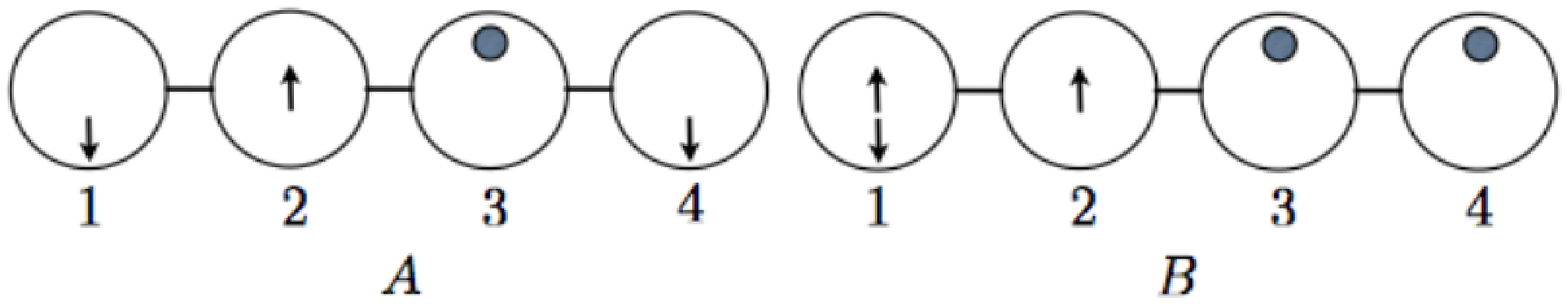}
\caption{4 sites}
\label{fig:4sites}
\end{figure}

To be more precise, for the single particle Hilbert space $V$ of
dimension $M=2K$, which is a tensor product of the spatial part with
dimension $K$ (with a fixed basis
$\ket{g_1},\ket{g_2},\ldots,\ket{g_K}$, which each $\ket{g_j}$
highly localized on the $j$th site) and the spin part with dimension
$2$ (with a fixed basis $\ket{\uparrow},\ket{\downarrow}$), then a
single occupancy state is a Slater determinant where
$\ket{g_j}\otimes\ket{\uparrow}, \ket{g_j}\otimes\ket{\downarrow}$
do not show up at the same time for any $j=1,2,\ldots K$. For
instance, the single occupancy state shown in Fig.~\ref{fig:4sites}A
can then be written as
$(\ket{g_1}\otimes\ket{\downarrow})\wedge(\ket{g_2}\otimes\ket{\uparrow})\wedge(\ket{g_4}
\otimes\ket{\downarrow})$. However, the state given in
Fig.~\ref{fig:4sites}B is
$(\ket{g_1}\otimes\ket{\uparrow})\wedge(\ket{g_1}\otimes\ket{\downarrow})\wedge(\ket{g_2}
\otimes\ket{\uparrow})$, which is not single occupancy as both
$\ket{g_1}\otimes\ket{\uparrow},\ket{g_1}\otimes\ket{\downarrow}$
show up.

For a simplified notation, we can relabel the $2K$ single particle
basis
$\{\ket{g_j}\otimes\ket{\uparrow},\ket{g_j}\otimes\ket{\downarrow}\}$
as $\{\ket{2j-1},\ket{2j}\}$, for $j=1,2\ldots,K$. The $M$ odd case
is hence similar, just that one can correspond the first $M-1$ basis
states, denoted by $\{\ket{2j-1},\ket{2j}\}$ to
$\{\ket{g_j}\otimes\ket{\uparrow},\ket{g_j}\otimes\ket{\downarrow}\}$
for $j=1,2\ldots,(M-1)/2$, and then the last `unpaired' basis state
$\ket{M}$ to $\ket{g_{(M+1)/2}}\otimes\ket{\uparrow}$, as the state
$\ket{g_{(M+1)/2}}\otimes\ket{\downarrow}$ is `inaccessible' by any
of the $N$-fermions. Note that despite the choice of the labelling,
the basis of the Hilbert space is fixed throughout the paper. This
corresponds to a simple fact that the spatial sites are physically
fixed.

Our main theorem of the paper proves that for $N=3$, the single
occupancy subspace $\mathcal{S}$ (i.e. subspace of all single
occupancy states) is universal. In other words, any $N=3$ fermionic
state is LU equivalent to a single occupancy state. Here by LU we
mean that local unitary transformation on the entire $M$-dimensional
single particle Hilbert, i.e. elements in $\Un(M)$, not only unitaries on the spatial part (or
the spin part) of the wavefunction only. This definition of LU
agrees with those used in quantum information community for
fermionic systems (see, e.g. ~\cite{AFOV08, ESB+02}), which preserves
entanglement properties, such as the spectra of reduced density
matrices.

For $M$ even, we can also obtain universal subspaces contained in
$\mathcal{S}$ whose dimension is minimal. Our main tool is to apply
Theorem 4.2 of~\cite{ad10} to the case of LU groups with
representation on the $N$-fermion Hilbert space
$\mathcal{V}=\wedge^N V$. It should be emphasized that this
application in our case is highly non-trivial. Indeed, after
selecting a candidate for a universal subspace, one has to prove
that the main condition of the theorem is satisfied. For small $M$
this can be done by using a computer, but the proof that it can be
applied for all $M$ is hard to find. Finding a universal subspace of
minimum dimension is much harder and we succeeded to find one which
works for even $M$. For the case when $M$ is odd, one may conjecture
that the candidate subspace described in Proposition \ref{pp:BSOV}
(see Eq. \eqref{eq:neparnoM}) is universal but this has been
verified only for $M=7,9,11$.

For the smallest non-trivial case of $N=3,M=6$ where $\dim\wedge^3
V=20$, a direct application of our results gives universal subspaces
with dimension $5$. This is indeed a big saving. It is interesting
to compare this result to Acin et. al's result of three qubits,
where they also have universal subspaces of dimension $5$, which are
spanned by LBPS. This may be related to the fact that, when $M=2N$,
each qubit state corresponds to a single occupancy state based on
the qubit to fermion mapping mentioned above. It should also be
mentioned that, although it is not directly applicable, our work may
shed light on the understanding of the N-representability
problem~\cite{Coleman}, where the single particle eigenvalues are
invariant under LU. Indeed, our universal subspace with dimension
$5$ for the case $N=3,M=6$ gives an alternative proof of the
N-representability equalities and inequalities in that
case~\cite{BD72,Kly05,Beth07,Kly09}.

To complete our study of single occupancy, we further prove that the
single occupancy subspace is not universal for $N>3$, i.e. not all
fermionic states are LU equivalent to a single occupancy state,
though for fixed $N$ almost all states are single occupancy in the
large $M$ limit. Our argument is based on dimension counting. For
concrete examples, we show that for $N$ even, BCS states are not LU
equivalent to any single occupancy state. This is intuitive as BCS
states are always `paired' so they are most unlikely to be
transformed with LU to something unpaired.

We organize our paper as follows. In Sec.~\ref{sec:Pre}, we provide preliminaries used throughout the paper. We formally define antisymmetric tensors, decomposable states (Slater determinants), LU operations, universal subspaces and single occupancy states. We also discuss the lower bound for dimension of universal subspaces. It sets the goal for later work. In Sec.~\ref{sec:USN3}, we consider universal subspaces for $N=3$. We prove our main theorem that for $N=3$, the single occupancy subspace $\mathcal{S}$ is universal, see Theorem \ref{thm:BSOV}. For $M$ even, we can also obtain subspaces of $\mathcal{S}$ whose dimension is minimal, and explicit choice of Slater determinants for those subspaces are
given. In Sec.~\ref{sec:USNgeq3}, we try to generalize our theorem to the $N>3$ case but obtain a negative result. That is, we prove that the single occupancy subspace is not universal for any
$M\geq 2N$. So there always exist some states which are not LU equivalent to single occupancy states. For $N$ even, we use BCS states as concrete examples of states that are not LU equivalent to any single occupancy states. Finally, a brief summary and discussion will be given in Sec.~\ref{sec:Sum}.

\section{Preliminaries}
\label{sec:Pre}

\subsection{Decomposable states and inner product}

Let $V$ be a complex Hilbert space of dimension $M$ and
$\cH=\otimes^N V$ the $N$th tensor power of $V$. The inner
product on $V$ extends to one on $\cH$ such that
$$
\braket{v_1,v_2,\ldots,v_N}{w_1,w_2,\ldots,w_N}=
\prod_{i=1}^N \braket{v_i}{w_i}.
$$
We denote
by $\wedge^N V$ the $N$th exterior power of $V$, i.e., the
subspace of $\cH$ consisting of the antisymmetric tensors.
We refer to vectors $\ket{\psi}\in\wedge^N V$ as
$N$-{\em vectors}.
We shall often identify the $N$-vectors with antisymmetric
tensors by using the embedding $\wedge^N V\to\cH$ given by:
 \bea
 \label{ea:isomorphism}
 \ket{v_1}\we\cdots\we\ket{v_N}
 \to
 \sum_{\sigma\in S_N}\mathrm{sgn}(\sigma)
 \ket{v_{\sigma(1)},v_{\sigma(2)},\ldots,v_{\sigma(N)}},
 \eea
where $S_N$ is the symmetric group on $N$ letters and
$\mathrm{sgn}(\sigma)$ the sign of the permutation $\sigma$.
We say that an $N$-vector $\ket{\psi}$ is {\em decomposable} if
it can be written as
$$
\ket{\psi}=|v_1\rangle\wedge \cdots  \wedge|v_N\rangle
$$
for some $\ket{v_i}\in V$. For brevity we may also write
$$
\ket{\psi}=|v_1\wedge \cdots  \wedge v_N\rangle
$$
when no confusion arises. Decomposable states are also called
Slater determinant states or Slater determinants in physics. For the
sake of brevity we shall mainly use the name of decomposable states.
Unless stated otherwise, the states will not be normalized for
convenience.

If $W$ is a vector subspace of $V$, then $\wedge^N W$ is a vector
subspace of $\wedge^N V$. Given a $\ket{\psi}\in\wedge^N V$, there
exists the smallest subspace $W\sue V$ such that
$\ket{\psi}\in\wedge^N W$. We shall refer to this subspace as the
{\em support} of $\ket{\psi}$. By Eq. \eqref{ea:isomorphism}, this
is equal to the range of the reduced density matrix of any single
particle of $\ket{\ps}$. In the case when $\ket{\psi}\ne0$ is
decomposable, say $\ket{\psi}=|v_1\wedge \cdots
\wedge v_N\rangle$, then its support is the subspace spanned by the
vectors $\ket{v_i}$, $i=1,\ldots,N$.

In general, an $N$-vector can always be written (not uniquely) as a
sum of decomposable $N$-vectors. With this notation, the inner
product in $\wedge^N V$ is characterized by the fact that the
inner product of two decomposable $N$-vectors
$\ket{\ph}=|w_1\wedge\cdots\wedge w_N\rangle$ and
$\ket{\psi}=|v_1\wedge\cdots\wedge v_N\rangle$ is equal
to the determinant of the $N\times N$ matrix $[\braket{w_i}{v_j}]$.
It is also equal to $\frac{1}{N!} \braket{\ph}{\psi}$ where the inner product is computed in the tensor space by using
the identification given by Eq. \eqref{ea:isomorphism}.
Hence, $\ket{\ph}\perp\ket{\psi}$ when this determinant is zero.

The 2-vectors $\ket{\psi}\in\we^2 V$ are often identified with
antisymmetric matrices of order $M$ via the isomorphism which
assigns to any decomposable 2-vector
$\ket{\psi}=\ket{v\we w}$ the matrix
$\ket{v}(\ket{w}^T)-\ket{w}(\ket{v}^T)$. Under this isomorphism,
the nonzero decomposable 2-vectors correspond to antisymmetric
matrices of rank 2.

We shall use the well known exterior multiplication map from
$\wedge^n V \times \wedge^N V$ into $\wedge^{N+n} V$, which is a
complex bilinear map. Note that this map is identically zero if
$N+n>M$. Let us also introduce the `partial inner product'
operations. Given an $n$-vector $\ket{\ph}$ and an $N$-vector
$\ket{\psi}$ with $n\le N$, we can define their partial inner
product $\braket{\ph}{\psi}\in\wedge^{N-n} V$. In other words, this
is a map
$$\wedge^n V \times \wedge^N V \to \wedge^{N-n} V$$
which is antilinear in the first argument $\ket{\ph}$ and complex
linear in the second argument $\ket{\psi}$. Thus, it suffices to
write the definition in the special case when $\ket{\ph}$ and
$\ket{\psi}$ are decomposable, say $\ket{\ph}=|w_1\wedge
\cdots  \wedge w_n\rangle$ and $\ket{\psi}=|v_1\wedge \cdots
\wedge v_N\rangle$. In that case we have
 \bea \label{eq:PartInnPr}
\braket{\ph}{\psi} &=& \sum_\s \mathrm{sgn}(\s)\left(
\prod_{i=1}^n
\braket{w_i}{v_{\s{i}}} \right) |v_{\s(n+1)}\wedge \cdots
\wedge v_{\s(N)}\rangle,
 \eea
where the summation is over all permutations $\s\in S_N$ such
that the sequence $\s(n+1),\ldots,\s(N)$ is increasing. This
definition is a modification of the one given in \cite[p. A III.166]{nb1970} for the partial bilinear inner product. We have omitted the overall factor $(-1)^{n(n-1)/2}$,
and replaced the bilinear inner product with the Hilbert space
inner product. More importantly, it appears that there is an
error (or misprint) in Bourbaki because the right hand side of their formula is not antisymmetric in the arguments $x_i^*$. We had to modify their formula by requiring that only the second one of their two sequences of $\s$-values is increasing.
The Eq. \eqref{eq:PartInnPr} can be rewritten as
 \bea \label{eq:PartInnPr-opet}
\braket{\ph}{\psi} &=& \sum_\s \mathrm{sgn}(\s)
\braket{ w_1\wedge \cdots \wedge w_n }
{ v_{\s{(1)}}\wedge \cdots \wedge v_{\s{(n)}} }
 \ket{ v_{\s(n+1)}\wedge \cdots \wedge v_{\s(N)} },
 \eea
where the summation now is over the permutations $\s\in S_N$ such
that both sequences $\s(1),\ldots,\s(n)$ and
$\s(n+1),\ldots,\s(N)$ are increasing.

If $n=1$ then the formula reads as follows:
 \bea \label{eq:PartInnPr-1}
\braket{w_1}{\psi} &=& \sum_{i=1}^N (-1)^{i-1} \braket{w_1}{v_i}
\ket{v_1 \wedge\cdots\wedge v_{i-1}\wedge v_{i+1}
\wedge\cdots\wedge v_N}.
 \eea
If we define the interior product $\iota(w)$ to be the
linear map $\wedge^N V\to\wedge^{N-1} V$ which sends $\ket{\psi}$
to $\braket{w}{\psi}$, then one can check that
Eq. \eqref{eq:PartInnPr} can be rewritten as follows:
 \bea
\braket{\ph}{\psi} &=& \iota(w_n)\circ\cdots\circ\iota(w_1)
(|v_1\wedge \cdots \wedge v_N\rangle).
 \eea

By using the embedding \eqref{ea:isomorphism}, the partial inner
product given by Eq. \eqref{eq:PartInnPr} is a positive scalar
multiple of the inner product $\braket{\ph}{\psi}$ where both $\ket{\ph}$ and $\ket{\ps}$ are viewed as antisymmetric tensors.  
Note that in the case $n=N$, the partial inner
product in Eq. \eqref{eq:PartInnPr} is the same as the inner
product on $\wedge^N V$ defined at the beginning of this section.

\subsection{Local unitary equivalence}

If $A_i:V\to V$, $i=1,\ldots,N$, are linear operators, then their
tensor product $\otimes_{i=1}^N A_i$ will be identified with the
unique linear operator on $\cH$ which maps
$$
 \ket{v_1}\otimes\ket{v_2}\otimes
\cdots\otimes\ket{v_N} \to
A_1\ket{v_1}\otimes A_2\ket{v_2}\otimes
\cdots\otimes A_N\ket{v_N},
\quad \ket{v_i}\in V,\quad(i=1,2,\ldots,N).
$$
If $A_1=A_2=\cdots=A_N=A$, then we shall write $\otimes^N A$
or $A^{\otimes N}$ instead of $\otimes_{i=1}^N A_i$ and refer to
$\otimes^N A$ as the $N$th tensor power of $A$.
If $\ket{\psi}\in\wedge^N V$ then we also have
$\otimes^N A(\ket{\psi})\in\wedge^N V$. Consequently, we
can restrict the operator $\otimes^N A$ to obtain a linear
operator on $\wedge^N V$, which we denote by $\wedge^N A$ or
$A^{\wedge N}$. We refer to $\wedge^N A$ as the $N$th exterior
power of $A$. Explicitly, we have
$$
\wedge^N A (\ket{v_1\wedge v_2\wedge
\cdots\wedge v_N})=
\ket{Av_1\wedge Av_2\wedge
\cdots\wedge Av_N}.
$$

The general linear group $G:=\GL(V)$ acts on $\cH$ by the
so called {\em diagonal action}:
$$
A\cdot(\ket{v_1}\otimes\ket{v_2}\otimes
\cdots\otimes\ket{v_N})=
A\ket{v_1}\otimes A\ket{v_2}\otimes
\cdots\otimes A\ket{v_N}, \quad A\in G,
\quad \ket{v_i}\in V,\quad(i=1,2,\ldots,N).
$$
In other words, $A\in G$ acts on $\cH$ as $\otimes^N A$. Similarly, we have the action of $G$ on $\wedge^N V$ where
$A\in G$ acts as $\wedge^N A$. For convenience we shall
abbreviate both of these two actions by a `$\cdot$', i.e.,
we have
 \bea
A\cdot\ket{v_1,\ldots,v_N} &=& \ket{Av_1,\ldots,Av_N}, \\
A\cdot\ket{v_1\wedge\cdots\wedge v_N} &=&
\ket{Av_1\wedge\cdots\wedge Av_N}.
 \eea
Both of these actions can be restricted to the unitary group
$\Un(V)$ of $V$.

We shall say that two $N$-vectors $\ket{\phi}$ and $\ket{\psi}$
are {\em equivalent} if they belong to the same $G$-orbit, i.e.,
$\ket{\psi}=A\cdot\ket{\phi}$ for some $A\in G$. We shall also
say that they are {\em unitarily equivalent} or
{\em LU-equivalent} if such $A$ can be chosen to be unitary.
For example, any decomposable $N$-vector is unitarily equivalent to a
scalar multiple of $\ket{1}\we\cdots\we\ket{N}$.

In practice, LU can be realized by a Hamiltonian
\begin{equation}
\label{eq:H} H=\sum_{j=1}^N H_j,
\end{equation}
where $H_j$ acts on the $j$th particle,
$H_j\ket{v_1,\ldots,v_N}=\ket{v_1,\ldots,H_jv_j,\ldots,v_N}$. The evolution of the system is unitary, i.e.
\begin{equation}
e^{-iHt}=\exp{(\sum_{j=1}^N -iH_jt)}=\bigotimes_{j=1}^N U_j,
\end{equation}
with $U_j=e^{-iH_jt}$. The Hamiltonian of the form Eq.~\eqref{eq:H} is usually called a single-particle Hamiltonian.
For a fermionic system, we have $H_1=H_2=\cdots=H_N=H$, hence
$U_1=U_2=\cdots=U_N=U$. Hence the evolution of the system is indeed the fermionic LU $\wedge^N U$.

In the case $N=2$, we have the canonical form for unitary
equivalence.
 \bl
 \label{le:2ferM}
If $N=2$, then any 2-vector $\ket{\psi}$ is unitarily equivalent
to $\sum^k_{i=1}c_i \ket{2i-1}\we\ket{2i}$ for some
$c_1\ge\cdots\ge c_k>0$ and some integer $k\ge0$. Moreover, the
coefficients $c_i$ are uniquely determined by $\ket{\psi}$.
 \el
 \bpf
The first assertion follows easily from the antisymmetric version
of Takagi's theorem, see \cite[p. 217]{hj85}.
To prove the second assertion, we denote by $K$ the antisymmetric
matrix corresponding to $\ket{\psi}$. If $U$ is a unitary
operator on $V$, then the 2-vector $U\ket{\psi}$ is represented
by the antisymmetric matrix $UKU^T$. It follows that the
Hermitian matrix $KK^*$ is transformed to $UKK^*U^\dag$, and so
its eigenvalues are independent of $U$. Since these eigenvalues
are $-c_1^2,-c_1^2,\ldots,-c_k^2,-c_k^2,0,\ldots,0$, the second
assertion follows.
 \epf

It can be seen that the Slater decomposition for 2-vectors is
essentially the same as the well-known Schmidt decomposition for
bipartite pure states. While the later can be easily generalized to
multipartite systems by regarding them as bipartite systems, the
same method does not work for the former, because the two parties of
the bipartite system would have different dimensions and this system
cannot be antisymmetric.

We remark that this lemma provides a complete classification of
$\Un(V)$-orbits on $\wedge^2 V$. Let us also mention that the
antisymmetric version of Takagi's theorem used in the above proof
was reproved recently in \cite[Lemma 1]{SCK+01}. However the authors
missed to observe that this result can be used to construct the
canonical form for bipartite fermionic states. Similarly, the
symmetric version of Takagi's theorem provides a canonical form for
pure states of two bosons, which is rediscovered recently
in~\cite{LZL+01,PY01}. See also \cite{hks12} where more advanced
mathematical tools were used to solve this classification problem as
well as its analogue for two bosons.

In the case when $V$ is 5-dimensional, there is also a simple
canonical form for 3-vectors under unitary equivalence. \bcr
 \label{cr:3fer5}
If $\dim V=5$ then every 3-vector $\ket{\psi}$ is unitarily
equivalent to
$(c_1\ket{1}\we\ket{2}+c_2\ket{3}\we\ket{4})\we\ket{5}$ for some
$c_1\ge c_2\geq 0$. Moreover, the coefficients $c_1$ and $c_2$ are
uniquely determined by $\ket{\psi}$. \ecr \bpf It is well known that
we can write $\ket{\psi}=\ket{\ph}\we\ket{v}$ for some 2-vector
$\ket{\ph}$ and some $\ket{v}\in V$. Clearly we may assume that
$\|v\|=1$ and that the support of $\ket{\ph}$ is orthogonal to
$\ket{v}$. Hence, by applying Lemma \ref{le:2ferM}, we may assume
that $\ket{v}=\ket{5}$ and
$\ket{\ph}=c_1\ket{1}\we\ket{2}+c_2\ket{3}\we\ket{4}$, $c_1\ge
c_2\ge0$. It remains to prove the uniqueness assertion. Note that
$\ket{\psi}$ is decomposable if and only if $c_2=0$, in which case
we have $c_1=\|\psi\|$. So, we may assume that $c_2>0$. Assume that
we also have $U\cdot\ket{\psi}=\ket{\psi'}=\ket{\ph'}\we\ket{5}$,
where $\ket{\ph'}=c'_1\ket{1}\we\ket{2}+c'_2\ket{3}\we\ket{4}$ and
$c'_1\ge c'_2>0$. Since $\ket{5}$ spans the kernel of the linear map
$V\to\we^4 V$ sending $\ket{x}\to\ket{\psi}\we\ket{x}$ and the same
is true for $\ket{\psi'}$, we deduce that $U\ket{5}=z\ket{5}$ where
$z$ is some phase factor (see \cite[Proposition 11.28]{bh07}). We
deduce that $\ket{\ph}$ and $\ket{\ph'}$ are unitarily equivalent,
and the uniqueness assertion follows from Lemma \ref{le:2ferM}. \epf

\subsection{Universal spaces and single occupancy states}
\label{se:UnivSO}

Let $G$ be a connected compact Lie group and $\cV$ a real
$G$-module. In other words, $\cV$ is a finite-dimensional real
vector space and we have a linear representation $\r$ of $G$
on $\cV$. For a vector subspace $\cW\subseteq\cU$, we say that
$\cW$ is {\em universal} if every $G$-orbit in $\cV$ meets
$\cW$ in at least one point. As an example consider the
conjugation action of the unitary group $\Un(d)$ on the space
of $d\times d$ complex matrices: $(A,X)\to AXA^{-1}$ where
$A\in\Un(d)$. Then the well known Schur's triangularization
theorem can be simply expressed by saying that the subspace
of upper triangular matrices is universal for this action.
There are many other interesting examples of universal
subspaces that occur in mathematics and physics. The first
question one may ask about the universal subspaces is: what
is the minimum dimension of a universal subspace? We can give
a simple lower bound for this dimension. Denote by $G_\cW$
the stabilizer of $\cW$ in $G$, i.e., the subgroup of $G$
consisting of all elements $g\in G$ such that $\r(g)$ maps
$\cW$ onto itself. Then we have the following lower bound
(see \cite[Lemma 4.1]{ad10}). If $\cW$ is a universal subspace
then
 \bea
\dim\cW \ge \dim\cV-\dim G/G_\cW,
 \eea
where the dimension of the quotient space $G/G_\cW$ is equal
to $\dim G -\dim G_\cW$. (All the dimensions here are to be
interpreted as the dimensions of real manifolds.) A case of
special interest is when $G_\cW$ contains a maximal torus,
say $T$, of $G$. Then $G/T$ has even dimension, say $2m$,
and we obtain the inequality $\dim\cW\ge\dim\cV-2m$. Assume
now that in fact $\cV$ is a complex vector space of dimension
$d$, and so of real dimension $2d$, while $\cW$ is still
assumed to be just a real vector subspace of $\cV$. If $\cW$
is universal, we must have $\dim\cW\ge2(d-m)$.

In our application we shall take $\cV=\wedge^N V$, where $V$
is a complex Hilbert space of dimension $M$, and so
$d=\binom{M}{N}$. We also specify that $G=\Un(M)$ and so
we have $m=\binom{M}{2}$. In particular,
for $N=3$, the above inequality becomes
$\dim_\bR\cW\ge M(M-1)(M-5)/3$. If $\cW$ is also a complex
subspace, we obtain that $\dim_\bC\cW\ge M(M-1)(M-5)/6$.

In order to define the important notion of single occupancy states
in a fermionic system, we partition the set of positive integers
into pairs $\{2i-1,2i\}$, $i=1,2,\ldots$. We shall refer to these
pairs as {\em standard pairs}. For any positive integer $i$ we
denote by $\overline{i}$ the standard pair to which $i$ belongs.
Thus, we have $\overline{i}=\{i,i+1\}$ when $i$ is odd and
$\overline{i}=\{i-1,i\}$ when $i$ is even. We shall say that a
finite sequence of positive integers $(i_1,i_2,\ldots,i_k)$ is a
{\em single occupancy sequence} if the standard pairs
$\overline{i_1},\overline{i_2},\ldots,\overline{i_k}$ are distinct.

Next we fix an o.n. basis $\{\ket{i}:i=1,\ldots,M\}$ of the Hilbert
space $V$. As explained below Fig \ref{fig:4sites}, each of our standard pairs
$\overline{i}$ associated with basis vectors $\{\ket{i},\ket{i+1}\}$ for $i$ odd,
then correspond to $\{\ket{g_{j}}\otimes\ket{\uparrow},\ket{g_j}\otimes\ket{\downarrow}\}$,
where $j=(i+1)/2$. Similar for $i$ even, the standard pair
$\overline{i}$ associated with basis vectors $\{\ket{i-1},\ket{i}\}$,
then correspond to $\{\ket{g_{j}}\otimes\ket{\uparrow},\ket{g_j}\otimes\ket{\downarrow}\}$,
where $j=i/2$.

The ${M\choose N}$ decomposable vectors $\ket{i_1\we\cdots\we i_N}$, $1\le
i_1<\cdots <i_N\le M$, form an o.n. basis of $\we^N V$. Hence, the
dimension of the space $\we^N V$ is $\binom{M}{N}$. If the sequence
$(i_1,i_2,\ldots,i_N)$ is a single occupancy sequence then we shall
say that $\ket{i_1\we\cdots\we i_N}$ is a {\em basic single
occupancy $N$-vector} (BSOV). Note that if $M$ is odd then
$\overline{M}=\{M,M+1\}$ and so there is no state in $V$ that is
paired with $\ket{M}$. Finally, we say that an $N$-vector
$\ket{\psi}\in\wedge^N V$ is a {\em single occupancy vector} (SOV),
relative to the above o.n. basis, if it is a linear combination of
the BSOVs. Thus, the set of all SOVs is a vector subspace (the SOV
subspace) of $\wedge^N V$ and the set of $BSOV$s is an o.n. basis of
this subspace. Clearly, this definition depends on the choice of the
o.n. basis of $V$. A simple counting shows that the number of BSOVs
is $2^N\binom{K}{N}$ if $M=2K$ is even and
$2^N\binom{K}{N}+2^{N-1}\binom{K}{N-1}$ if $M=2K+1$ is odd. For the
case $N=K$ and $M=2K$, if one ensures single occupancy, then there
are indeed only two possible states per site (spin up and spin
down). In this case, any configuration has a natural correspondence
to an $N$-qubit state.

\begin{lemma}
 \label{le:N=2SOV}
For $N=2$, $M>3$, any state is unitarily equivalent to an SOV.
 \end{lemma}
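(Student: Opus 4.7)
The plan is to combine the $N=2$ canonical form of Lemma~\ref{le:2ferM} with one additional permutation unitary that simultaneously breaks up every standard pair. By Lemma~\ref{le:2ferM} there is an $A\in\Un(V)$ with
\[
A\cdot\ket{\psi}=\sum_{i=1}^{k}c_i\,\ket{(2i-1)\we(2i)}, \qquad k\le M/2.
\]
Each summand uses exactly the $i$-th standard pair $\{2i-1,2i\}$, so it is as far as possible from being single-occupancy. The task then reduces to finding a single $U\in\Un(V)$ such that $\we^{2}U$, applied to every summand at once, sends each index pair to a non-standard pair.

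My choice will be the permutation unitary $U$ corresponding to the permutation $\pi$ of $\{1,\ldots,M\}$ that fixes $1$, fixes $M$ when $M$ is even, and transposes $2j$ with $2j+1$ for every $j\ge 1$ with $2j+1\le M$. Then
\[
(UA)\cdot\ket{\psi}=\sum_{i=1}^{k} c_i\, \ket{\pi(2i-1)\we\pi(2i)},
\]
and it will suffice to verify that each $\{\pi(2i-1),\pi(2i)\}$ is not of the form $\{2m-1,2m\}$. The three cases to handle are: $i=1$, which gives $\{1,3\}$; a generic $2\le i$ with $2i+1\le M$, which gives $\{2i-2,2i+1\}$; and the terminal case $i=k=M/2$ that only arises for even $M$, which gives $\{2k-2,2k\}$. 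The first and third are same-parity pairs at distance $2$ while the middle is an opposite-parity pair at distance $3$, so none can be a standard pair. Hence every summand of $(UA)\cdot\ket{\psi}$ is a BSOV, and the vector itself is an SOV.

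There is no substantive obstacle beyond this finite case check; the content of the hypothesis $M>3$ (implied anyway by the standing assumption $M\ge 2N$) is simply that the index $3$ exists, so that $\pi$ has enough room to pull the first summand out of its standard pair $\{1,2\}$. The permutation above is already short enough to verify by inspection, and the whole argument is essentially a direct corollary of Lemma~\ref{le:2ferM}.
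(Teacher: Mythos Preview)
Your proof is correct and follows the same approach as the paper, which records only the single line ``This is a direct corollary of Lemma~\ref{le:2ferM}.''  You have made explicit the one extra step the paper suppresses: the canonical form $\sum_i c_i\,\ket{(2i-1)\we(2i)}$ is itself \emph{not} an SOV, and one must compose with a further unitary (your permutation $\pi$) to move each standard pair to a non-standard one.  Your case analysis is clean and the permutation works uniformly for both parities of $M$, so nothing is missing.
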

\bpf
This is a direct corollary of Lemma~\ref{le:2ferM}.
\epf

\section{Universal subspaces for $N=3$}
\label{sec:USN3}

\subsection{The SOV-subspace is universal for $N=3$}

The main result of this section is that any tripartite antisymmetric state of dimension $M\ge5$ is
unitarily equivalent to a single occupancy state.
Recall we say that a vector subspace
$\mathcal{W}\subseteq\we^3 V$ is {\em universal} if every
$\ket{\psi}\in\we^3 V$ is unitarily equivalent to some
$\ket{\ph}\in \mathcal{W}$. We denote the unitary group of $V$ by
$\Un(V)$. If an o.n. basis of $V$ is fixed, we shall identify
$\Un(V)$ with the group $\Un(M)$ of unitary matrices of order
$M$.

\bt \label{thm:BSOV}
Let $V$ be a complex Hilbert space of dimension $M\ge5$,
$\{\ket{i}:i=1,2,\ldots,M\}$ an o.n. basis of $V$, and
let $e_{ijk}=\ket{i}\we\ket{j}\we\ket{k}$. Then the
SOV-subspace is universal, i.e., any 3-vector
$\ket{\psi}\in\we^3 V$ is unitarily equivalent to an SOV.
 \et

 \bpf
The case $M=5$ follows from Corollary \ref{cr:3fer5}. So, we
shall assume that $M\ge6$. We shall write $M=2K$ when $M$ is even
and $M=2K+1$ when $M$ is odd.

We apply \cite[Theorem 4.2]{ad10} to the problem at hand. Let us
first explain what this theorem asserts in this concrete case. We
consider the representation $\rho$ of $\Un(M)$ on the space
$\cV:=\wedge^3 V$. We denote by $T$ the maximal torus of $\Un(M)$
consisting of the diagonal matrices. Thus, if $x\in T$ then
$x=\diag(\xi_1,\ldots,\xi_M)$ where each $\xi_i$ is a complex
number of unit modulus. Each basis vector $e_{ijk}$ is an
eigenvector of $T$. Indeed, we have
$x\cdot e_{ijk}:=\rho(x)(e_{ijk})=\xi_i\xi_j\xi_k e_{ijk}$.

A character, $\chi$, of $T$ is a continuous homomorphism into
the circle group $S^1:=\{z\in\bC:~|z|=1\}$. They form an
abelian group under multiplication, but it is convenient to
use the additive notation. This means that if $\chi'$
and $\chi''$ are characters of $T$, then their sum
$\chi=\chi'+\chi''$ is defined by $\chi(x)=\chi'(x)\chi''(x)$.
The map $\chi_i:T\to S^1$, defined by $\chi_i(x)=\xi_i$, is
obviously a character of $T$. The group of all characters
of $T$ is the free abelian group with basis
$\{\chi_i:1\le i\le M\}$. By using this notation, we have
$x\cdot e_{ijk}=(\chi_i+\chi_j+\chi_k)(x) e_{ijk}$ and we
say that the character $\chi_{ijk}:=\chi_i+\chi_j+\chi_k$ is
the weight of the eigenvector $e_{ijk}$. An eigenvector which
is fixed by $T$ would have weight 0, but in $\cV$ there are
no such eigenvectors.

We embed the character group of $T$
into the polynomial ring $\cP:=\bR[x_1,\ldots,x_M]$ by sending
$\chi_i\to x_i$ for each $i$. Let $\cI$ be the ideal of
$\bR[x_1,\ldots,x_M]$ generated by the elementary symmetric
functions $\sigma_1,\ldots,\sigma_M$ of the $x_i$.
Let $\cB$ be the o.n. basis of $\cV$ consisting of all
$e_{ijk}$. Denote by $\cB_s$ the subset of $\cB$ consisting
of all BSOVs. Let $\cU$ be a subspace of $\cV$ spanned by a
subset $\cB'\subseteq\cB$. We define its characteristic
polynomial $f_\cU$ to be the product of all linear
polynomials $x_i+x_j+x_k$ taken over all triples $(i,j,k)$
such that $e_{ijk}\notin\cB'$. Then \cite[Theorem 4.2]{ad10}
asserts that the subspace $\cU$ is universal if
$f_\cU\notin\cI$.

To prove (a) we shall take $\cB'=\cB_s$.
Consequently, we have $\cU$ equals the SOV-subspace and the polynomial
$q_0:=f_\cU$ is given explicitly by the formula
\bea \label{eq:Pol-q0}
 q_0 &=& \prod_{i=1}^K \prod_{j\ne 2i-1,2i}
(x_{2i-1} + x_{2i} + x_j).
\eea

We shall need another important fact concerning the ideal $\cI$.
First observe that $\cI$ is a homogeneous ideal, i.e., it is the
sum of the intersections $\cI_d:=\cI\cap\cP_d$, where $\cP_d$
is the subspace of $\cP$ consisting of all homogeneous polynomials of degree $d$. We introduce an inner product
$\braket{\cdot}{\cdot}$ on $\cP$ by declaring that the basis of $\cP$ consisting of all monomials is an o.n. basis.
In particular, $\cP_d\perp\cP_e$ if $d\ne e$.
We denote by $S_M$ the symmetric group on $M$ letters which
permutes the $M$ variables $x_i$ and point out that it
preserves the above inner product on $\cP$.
The following polynomial of degree $\d:=M(M-1)/2$ (with the well known expansion)
 \bea \label{eq:Pol-p}
 p:=\prod_{1\le i<j\le M}(x_j-x_i)=\sum_{\s\in S_M}\text{sgn} (\s) x_{\s 1}^0 x_{\s 2}^1 x_{\s 3}^2 \cdots x_{\s M}^{M-1},
 \eea
will play an important role. Namely, for any $f\in\cP_\d$ we
know that $f\in\cI$ if and only if $\braket{f}{p}=0$, see \cite[Theorem 4.2]{ad10}. For $f,g\in\cP$, we shall write $f\equiv g$ if $g-f\in\cI$ and in that case we say that $f$ and $g$ are congruent modulo $\cI$. Thus, if $f\equiv g$ then $\braket{f}{p}=\braket{g}{p}$. Note that the degree of $q_0$ is strictly less than $\d$. For that reason we shall introduce the polynomial
$q=\mu q_0$ of degree $\d$, where $\mu=x_1 x_3\cdots x_{M-3} x_{M-1}$ if $M$ is even and $\mu=(x_1 x_3\cdots x_{M-2})^2$ if $M$ is odd. To prove (b) we shall prove that $q\notin\cI$ (which
implies that $q_0\notin\cI$).

From the formula \eqref{eq:Pol-q0}, we obtain that
\bea \label{eq:Pol-q0-1}
 q_0 &=& \prod_{i=1}^K \sum_{j=0}^{M-2} (x_{2i-1}+x_{2i})^{M-2-j}
\s^{(i)}_j,
\eea
where $\s^{(i)}_j$ is the $j$th elementary symmetric function
of the variables $x_k$ with $k\ne 2i-1,2i$.
(By convention, $\s^{(i)}_0=1$ and $\s^{(i)}_{-1}=0$.)
By using the obvious recurrence formula
$\s_j=\s^{(i)}_j+(x_{2i-1}+x_{2i})\s^{(i)}_{j-1}+x_{2i-1}x_{2i}
\s^{(i)}_{j-2}\equiv 0$, we obtain that
\bea \label{eq:sigma}
\s^{(i)}_j\equiv (-1)^j\sum_{k=0}^j x_{2i-1}^{j-k}x_{2i}^k,
\quad j=0,1,\ldots,M-2.
\eea
It follows that
\bea \label{eq:Pol-q0-2}
 q_0 &\equiv& \prod_{i=1}^K \left( \sum_{j=0}^{M-2} (-1)^j
(x_{2i-1}+x_{2i})^{M-2-j} \sum_{k=0}^j x_{2i-1}^{j-k}x_{2i}^k
\right).
\eea

We shall need the expansion
\bea \label{eq:Expansion}
\sum_{j=0}^{M-2}(-1)^j (x+y)^{M-2-j}\sum_{k=0}^j x^{j-k}y^k =
\sum_{k=0}^{M-2} a_k^{(M)} x^{M-2-k} y^k,
\eea
where $x$ and $y$ are commuting independent variables. Note that $a_k^{(M)}=a_{M-2-k}^{(M)}$ for each $k$, and
$a_0^{(M)}=a_{M-2}^{(M)}=1$ for $M$ even while
$a_0^{(M)}=a_{M-2}^{(M)}=0$ for $M$ odd. For convenience, we
set $a_{M-1}^{(M)}=0$.
We now distinguish two cases.

Case 1: $M$ is even. By multiplying Eq. \eqref{eq:Pol-q0-1} by
$\mu$ and by using Eqs. \eqref{eq:sigma} and
\eqref{eq:Expansion}, we obtain that
 \bea \label{eq:Pol-q0-4}
 q\equiv\prod_{i=1}^K
\sum_{k=0}^{M-2}a_k^{(M)} x_{2i-1}^{M-1-k}x_{2i}^k.
 \eea
By setting $s_{i,k}=a_k^{(M)} x_{2i-1}^{M-1-k}x_{2i}^k
+a_{M-1-k}^{(M)} x_{2i-1}^k x_{2i}^{M-1-k}$
for $k=0,1,\ldots,K-1$, we obtain that
 \bea
q \equiv \prod_{i=1}^K(s_{i,0}+s_{i,1}+\cdots+s_{i,K-1})
= \sum_f \prod_{i=1}^K s_{i,f(i-1)},
\eea
where $f$ runs through all functions mapping the set
$\{0,1,\ldots,K-1\}$ into itself. Let us denote by $q_f$
the summand corresponding to $f$.
If $f$ is not a permutation, then each monomial that occurs in
the expansion of $q_f$ will contain two variables with the
same exponent, and so the inner product $\braket{q_f}{p}$
vanishes. If two of such functions, say $f$ and $g$ are
permutations then there is an even permutation of the variables
which sends $q_f$ to $q_g$. We deduce that
$\braket{q_f}{p}=\braket{q_g}{p}$, and so
$\braket{q}{p}=K!\braket{q_{\rm id}}{p}$
where id is the identity permutation.

If the variables $x_{2i-1}$ and $x_{2i}$ do not occur in a
polynomial $r\in\cP$, then by using the transposition which
interchanges $x_{2i-1}$ and $x_{2i}$, we obtain that
$\braket{x_{2i-1}^{M-1-k}x_{2i}^k r}{p}=
-\braket{x_{2i-1}^k x_{2i}^{M-1-k}r}{p}$, and so
 \bea
\braket{s_{i,k}r}{p}=(a_{M-1-k}^{(M)}-a_k^{(M)})
\braket{x_{2i-1}^k x_{2i}^{M-1-k}r}{p}.
 \eea
Since $q_{\rm id}=s_{10}s_{21}s_{32}\cdots s_{K,K-1}$, we obtain
that
\bea \label{eq:SkalPr-3}
\braket{q}{p}=K!\prod_{i=0}^{K-1} (a_{M-1-i}^{(M)}-a_i^{(M)}).
\eea
By Lemma \ref{le:Koef-a} below, we have $\braket{q}{p}\ne0$.

Case 2: $M$ is odd. The proof follows closely the one for
Case 1 and we shall only briefly sketch the main steps. Recall
that in this case $a_0^{(M)}=a_{M-2}^{(M)}=0$. By multiplying
Eq. \eqref{eq:Pol-q0-1} by the new $\mu$ and by using Eqs.
\eqref{eq:sigma} and \eqref{eq:Expansion}, we obtain that
 \bea \label{eq:Poli-q0-4}
 q \equiv \prod_{i=1}^K \left( x_{2i-1}^2
\sum_{k=1}^{M-2} a_k^{(M)} x_{2i-1}^{M-2-k} x_{2i}^k \right)=
\prod_{i=1}^K\sum_{k=1}^{M-2}a_k^{(M)} x_{2i-1}^{M-k} x_{2i}^k.
 \eea
By setting $s'_{i,k}=a_k^{(M)} x_{2i-1}^{M-k}x_{2i}^k
+a_{M-k}^{(M)} x_{2i-1}^k x_{2i}^{M-k}$ for $k=1,2,\ldots,K$,
we have
 \bea
q \equiv \prod_{i=1}^K(s'_{i,1}+s'_{i,2}+\cdots+
s'_{i,K}) = \sum_f \prod_{i=1}^K s'_{i,f(i)},
 \eea
where $f$ runs through all functions mapping the set
$\{1,2,\ldots,K\}$ into itself. Let us denote by $q_f$ the
summand corresponding to $f$. If $f$ is not a permutation, then
$\braket{q_f}{p}=0$. When $f$ is a permutation, this inner
product is independent of $f$. Thus,
$\braket{q}{p}=K!\braket{q_{\rm id}}{p}$ where
$q_{\rm id}=s'_{11}s'_{22}s'_{33}\cdots s'_{K,K}$. By using a similar argument as in Case 1, we obtain that
\bea \label{eq:SkalPr-4}
\braket{q}{p}=K!\prod_{i=1}^K (a_{M-i}^{(M)}-a_i^{(M)}).
\eea
By Lemma \ref{le:Koef-a}, $\braket{q}{p}\ne0$ which completes the proof.
 \epf

\begin{lemma}  \label{le:Koef-a}
For any $M>4$, let $a_i^{(M)}$ be the coefficients defined by
Eq. \eqref{eq:Expansion}.
Then
\begin{enumerate}
\item[1.] if $M$ is even, then $a_p^{(M)}\neq a_{M-1-p}^{(M)}$ for any $p$,
\item[2.] if $M$ is odd, then $a_p^{(M)}\neq a_{M-p}^{(M)}$ for any $p$.
\end{enumerate}
 \end{lemma}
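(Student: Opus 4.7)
The plan is to encode the coefficients $a_p^{(M)}$ into the generating polynomials $\sigma^{(M)}(\alpha):=\sum_{p=0}^{M-2}a_p^{(M)}\alpha^p$, derive a simple Pascal-type recursion in $M$, and then deduce both assertions of the lemma by a single induction.

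First I would translate the defining formula \eqref{eq:Expansion} into a recursion on $M$. Write $S^{(M)}(x,y):=\sum_{j=0}^{M-2}(-1)^j(x+y)^{M-2-j}h_j(x,y)$, where $h_j$ is the complete symmetric polynomial. Isolating the $j=M-2$ summand gives $S^{(M)}(x,y)=(x+y)S^{(M-1)}(x,y)+(-1)^M h_{M-2}(x,y)$. Setting $\alpha=y/x$ and dividing by $x^{M-2}$ yields $\sigma^{(M)}(\alpha)=(1+\alpha)\sigma^{(M-1)}(\alpha)+(-1)^M(1+\alpha+\cdots+\alpha^{M-2})$, or equivalently $a_p^{(M)}=a_p^{(M-1)}+a_{p-1}^{(M-1)}+(-1)^M$ for $0\le p\le M-2$.

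Next I would exploit the palindromic symmetry $a_k^{(M)}=a_{M-2-k}^{(M)}$ (from $S^{(M)}(x,y)=S^{(M)}(y,x)$) to reduce both assertions to a single statement. Set $\Delta_p^{(M)}:=a_p^{(M)}-a_{p-1}^{(M)}$ and $\delta_p^{(M)}:=a_p^{(M)}-a_{p-2}^{(M)}$. Subtracting consecutive (respectively second-consecutive) instances of the above coefficient recursion gives $\Delta_p^{(M)}=\delta_p^{(M-1)}$ for $1\le p\le M-2$ together with the Pascal-type relation $\delta_p^{(M)}=\delta_p^{(M-1)}+\delta_{p-1}^{(M-1)}$ for $2\le p\le M-2$, while $\delta_1^{(M)}=a_1^{(M)}=\lfloor M/2\rfloor-1$ by direct iteration of the recursion. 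Using the palindromy, part~1 becomes $\Delta_i^{(M)}\ne0$ for $1\le i\le K-1$ (together with the trivial $a_0^{(M)}=1\ne0$) and part~2 becomes $\delta_i^{(M)}\ne0$ for $1\le i\le K$; the first is subsumed by the second applied to $M-1$ via the identity $\Delta_i^{(M)}=\delta_i^{(M-1)}$. It therefore suffices to prove the unified statement $(\star)$: for every $M\ge 4$, $\delta_p^{(M)}>0$ whenever $1\le p\le\lfloor(M-1)/2\rfloor$.

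Finally I would prove $(\star)$ by induction on $M$, with the base case $M=4$ (where $\delta_1^{(4)}=1$) being immediate. In the inductive step, $\delta_1^{(M)}=\lfloor M/2\rfloor-1>0$ handles $p=1$, and for $2\le p\le\lfloor(M-1)/2\rfloor$ both summands on the right of $\delta_p^{(M)}=\delta_p^{(M-1)}+\delta_{p-1}^{(M-1)}$ are positive by the inductive hypothesis, with one exception: when $M=2K+1$ and $p=K$ the first summand is $\delta_K^{(2K)}$, which vanishes by the antipalindromic identity $\delta_p^{(M')}=-\delta_{M'-p}^{(M')}$ (an immediate consequence of $a_k^{(M')}=a_{M'-2-k}^{(M')}$) evaluated at the midpoint of the even index $M'=2K$; even then the second summand $\delta_{K-1}^{(2K)}$ alone is positive by induction, so $\delta_K^{(M)}>0$ as required. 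I expect this middle-index edge case to be the only subtle point in the argument; once it is dispatched, both parts of the lemma follow at once, part~1 via $\Delta_p^{(2K)}=\delta_p^{(2K-1)}$ and part~2 directly.
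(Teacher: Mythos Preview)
Your proof is correct and follows essentially the same route as the paper's: both derive the recursion $a_p^{(M)}=a_{p-1}^{(M-1)}+a_p^{(M-1)}+(-1)^M$, use the palindromy $a_k^{(M)}=a_{M-2-k}^{(M)}$ to reduce the two assertions to one another, and then prove strict monotonicity of the first half of the sequence by induction. Your one-step induction on $M$ via the second differences $\delta_p^{(M)}$ (with the antipalindromic identity dispatching the midpoint $\delta_K^{(2K)}=0$) is a slightly tidier packaging of the paper's two-step induction on $K$, but the mathematical content is the same.
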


\begin{proof}
Since the polynomial defining the coefficients $a_i^{(M)}$ is
symmetric, we have $a_p^{(M)}=a_{M-2-p}^{(M)}$. By expanding
this polynomial we obtain that
\begin{eqnarray}
&& \sum\limits^{M-2}_{j=0} (-1)^j (x+y)^{M-2-j} (\frac{x^{j+1}-y^{j+1}}{x-y})\\
&=& \sum\limits^{M-2}_{j=0} (-1)^j \sum\limits_{k=0}^{M-2-j} {M-2-j\choose k} x^{M-2-j-k}y^k \sum\limits_{t=0}^j x^{j-t} y^t\\
&=&\sum\limits^{M-2}_{j=0}\sum\limits_{k=0}^{M-2-j} \sum\limits_{t=0}^j(-1)^j  {M-2-j\choose k} x^{M-2-t-k}y^{k+t},
\end{eqnarray}
and so

\begin{eqnarray*}
a_p^{(M)}&=&\sum\limits_{k=0}^p \sum\limits_{j=0}^{M-2-p}(-1)^{M-2-k-j}{k+j\choose k}\\
&=&\sum\limits_{k=0}^p \sum\limits_{j=0}^{M-2-p}(-1)^{M-2-k-j}({k+j-1\choose k-1}+{k+j-1\choose k})+(-1)^{M-2}\\
&=&\sum\limits_{k=0}^{p-1} \sum\limits_{j=0}^{M-2-p}(-1)^{M-2-(k+1)-j}{k+j\choose k}+\sum\limits_{k=0}^p \sum\limits_{j=0}^{M-3-p}(-1)^{M-2-k-(j+1)}{k+j\choose k}+(-1)^{M-2}\\
&=&a_{p-1}^{(M-1)}+a_p^{(M-1)}+(-1)^{M-2}.
\end{eqnarray*}

Then the values of $a_p^{(M)}$ can be easily computed, see
Table \ref{table:nonlin}.
\begin{table}[ht]
\caption{Values of $a_p^{(M)}$} 
\centering  
\begin{tabular}{c c c c c c c c } 
\hline\hline                        
$M$ & $a_0^{(M)}$ & $a_1^{(M)}$ & $a_2^{(M)}$ & $a_3^{(M)}$ & $a_4^{(M)}$ & $a_5^{(M)}$ & $a_6^{(M)}$\\ [0.5ex] 
\hline                  
4 & 1 & 1 & 1 & 0 &  &  &  \\ 
5 & 0 & 1 & 1 & 0 &  &  &  \\
6 & 1 & 2 & 3 & 2 & 1 & 0 &  \\
7 & 0 & 2 & 4 & 4 & 2 & 0 & \\
8 & 1 & 3 & 7 & 9 & 7 & 3 & 1 \\ [1ex]      
\hline 
\end{tabular}
\label{table:nonlin} 
\end{table}

One may further observe that, for any even number $M$, $a_p^{(M)}\neq a_{M-1-p}^{(M)} \iff a_p^{(M)}\neq a_{p-1}^{(M)} \iff a_{p-1}^{(M-1)}+a_p^{(M-1)}\neq a_{p-2}^{(M-1)}+a_{p-1}^{(M-1)} \iff a_p^{(M-1)}\neq a_{p-2}^{(M-1)} \iff a_p^{(M-1)}\neq a_{M-1-p}^{(M-1)}$.

Therefore, the two parts in our lemma are indeed equivalent, hence we only need to focus on the case that $M$ is even.

For even $M=2K>4$ and $1\leq p\leq M-3$, we always have $a_0^{(M)}=1$ and
\begin{eqnarray}
a_p^{(M)}&=&a_{p-1}^{(M-1)}+a_p^{(M-1)}+1\\
&=&a_{p-2}^{(M-2)}+a_{p-1}^{(M-2)}-1+a_{p-1}^{(M-2)}+a_{p}^{(M-2)}-1+1\\
&=&a_{p-2}^{(M-2)}+2a_{p-1}^{(M-2)}+a_p^{(M-2)}-1.
\end{eqnarray}

We then show that $\{a_0^{(2K)}, a_1^{(2K)},\cdots, a_{K-1}^{(2K)}\}$ is an increasing sequence. It is obviously true for $K=3,4$. Assume our claim is true for $K=K_0$, let's look into the case $K=K_0+1$. We have $a_p^{(2K_0+2)}=a_{p-2}^{(2K_0)}+2a_{p-1}^{(2K_0)}+a_p^{(2K_0)}-1$ for any $1\leq p \leq K_0$. Then $\{a_0^{(2K_0+2)}, a_1^{(2K_0+2)},\cdots, a_{K_0-1}^{(2K_0+2)}\}$ is also an increasing sequence. Since $a_{K_0}^{(2K_0+2)}-a_{K_0-1}^{(2K_0+2)}=(a_{K_0-2}^{(2K_0)}+2a_{K_0-1}^{(2K_0)}+a_{K_0}^{(2K_0)}-1)-(a_{K_0-3}^{(2K_0)}+2a_{K_0-2}^{(2K_0)}+a_{K_0-1}^{(2K_0)}-1)=a_{K_0}^{(2K_0)}+a_{K_0-1}^{(2K_0)}-a_{K_0-2}^{(2K_0)}-a_{K_0-3}^{(2K_0)}=a_{K_0-1}^{(2K_0)}-a_{K_0-3}^{(2K_0)}>0$, the whole sequence $\{a_0^{(2K_0+2)}, a_1^{(2K_0+2)},\cdots, a_{K_0}^{(2K_0+2)}\}$ is also an increasing sequence. This completes our proof.
\end{proof}

\subsection{Universal subspaces of minimal dimensions}

It follows from Proposition \ref{prp:UnivCond} that any subspace
of $\wedge^3 V$ which is spanned by less than $M(M-1)(M-5)/6$
basic trivectors $e_{ijk}$ is not universal. It is natural to
raise the question whether there exist universal subspaces
spanned by exactly $M(M-1)(M-5)/6$ BSOVs. In the next proposition we show that this is indeed true when $M$ is even, and verify it for odd $M=7,9,11$.

\bpp \label{pp:BSOV} Let $V$ be a complex Hilbert space of
dimension $M$ and let $e_{ijk}=\ket{i}\we\ket{j}\we\ket{k}$
for $1\le i<j<k\le M$.

(a) If $M=2K\ge6$ is even, then the complex subspace spanned
by all BSOVs $e_{ijk}$ except the $K$ of them with indexes
 \bea
 \label{ea:evenM}
(1,3,6),(1,4,6)\quad{\textit and}\quad(1,2i-3,2i-1)\quad
3\le i\le K,
 \eea
is a universal subspace of minimum dimension, $M(M-1)(M-5)/6$.

(b) If $M=7,9,11$, then the complex subspace spanned by all BSOVs $e_{ijk}$ except the $M-1$ of them with indexes
 \bea \label{eq:neparnoM}
(1,4,M),(2,5,M)\quad{\textit and}\quad(i,i+2,M)
\quad1\le i\le M-3;
 \eea
is a universal subspace of minimum dimension, $M(M-1)(M-5)/6$.
 \epp
 \bpf (a) We can write an arbitrary $\ket{\psi}\in\we^3 V$ as
$\ket{\ps}=\sum_{i,j,k} c_{ijk} e_{ijk}$ where the summation
is over all triples $(i,j,k)$, $1\le i<j<k\le M$. First, by
Theorem \ref{thm:BSOV} we may assume that the coefficients
$c_{ijk}=0$ whenever $e_{ijk}$ is not a BSOV. For $i\le K$,
denote by $\Un_i$ the subgroup of $\Un(M)$ which fixes all basis
vectors $\ket{j}$ with $j\ne 2i-1,2i$. All subsequent LU
transformations will be performed by using these subgroups,
and so the above mentioned property of the coefficients
$c_{ijk}$ will be preserved.

Second, we shall prove by induction that, by using only the
$\Un_i$ operations, we can achieve our goal to make $c_{ijk}=0$
also when $(i,j,k)$ is one of the triples $(1,3,6)$, $(1,4,6)$ or $(1,2i-3,2i-1)$ with $3\le i\le K$.

If $K=3$, then $M=6$. Note that the Hilbert
space of 3-qubits can be isometrically embedded in this
fermionic system as the subspace spanned by the BSOVs:
 $$
\ket{i}\ox\ket{j}\ox\ket{k}\to\ket{i+1}\we\ket{j+3}\we\ket{k+5}, \quad i,j,k\in\{0,1\}.
 $$
The desired result then follows from the well known fact \cite{ajt01} that by performing LU transformations on any given pure state of three qubits one can make vanish any three coefficients (in the standard o.n. basis).
This is because after the embedding, the three-qubit LU can be viewed as a special case of the fermionic LU, where we allow only the fermionic LU transformations given by block-diagonal unitary in $\Un(6)$ with three $2\times 2$ blocks in $\Un(2)$.

Now let $K>3$. By the induction hypothesis, we may assume that the coefficients $c_{ijk}=0$ when $(i,j,k)$ is one of the triples $(1,3,6)$, $(1,4,6)$ or $(1,2i-3,2i-1)$ with $3\le i<K$.
We can choose $X\in\Un_K$ such that
$X(c_{1,M-3,M-1}\ket{M-1}+c_{1,M-3,M}\ket{M})\propto\ket{M}$.
This means that the coefficient of $e_{1,M-3,M-1}$ in
$X\cdot\ket{\ps}$ is 0. This completes the inductive proof.

(b) This proof is computational. All computations were performed by using Singular \cite{GPS}. We just have to apply
\cite[Theorem 4.2]{ad10} as explained in the proof of Theorem
\ref{thm:BSOV}. The cases $M=7$ and $M=9$ were straightforward,
and we omit the details. The values $\braket{q}{p}$ that we computed for $M=7,9,11$ are 48, 10368 and 12431232, respectively.
In the case $M=11$ we had first to eliminate the variable
$x_{11}$ in order to avoid the problem of running out of memory. We shall describe this elimination procedure in general.

Thus, let $M\ge7$ be an odd integer, and set $M=2K+1$.
We define the polynomial ring $\cP$, its ideal $\cI$, and the inner product $\langle\cdot,\cdot\rangle$ of polynomials as in
the proof of Theorem \ref{thm:BSOV}. The polynomials $q_0$ and
$p$ will be the same as in that proof, see Eqs. \eqref{eq:Pol-q0} and \eqref{eq:Pol-p}. The congruence of polynomials will be again modulo the ideal $\cI$. We shall also use the coefficients
$a^{(M)}_k$, $k=0,1,\ldots,M-2$, defined by the polynomial identity Eq. \eqref{eq:Expansion}. (These coefficients are the same as the ones used in Lemma \ref{le:Koef-a}.) Other symbols that we are going to introduce, like $\mu,q$ and $s_{i,k}$ will have different meaning from the same symbols used in the proof of Theorem \ref{thm:BSOV}. Since $a^{(M)}_0=a^{(M)}_{M-2}=0$, from Eqs. \eqref{eq:Pol-q0-2} and \eqref{eq:Expansion} we obtain that
 \bea
 q_0 &\equiv& \prod_{i=1}^K
\sum_{k=1}^{M-3}a_k^{(M)} x_{2i-1}^{M-2-k}x_{2i}^k \notag \\
&=& \prod_{i=1}^K \sum_{k=1}^{K-1} s_{i,k}=
\sum_f \prod_{i=1}^K s_{i,f(i)},
 \eea
where $f$ runs through all functions
$\{1,\ldots,K\}\to\{1,\ldots,K-1\}$ and
 \bea
s_{i,k} &=& (x_{2i-1}x_{2i})^k \left( a_k^{(M)} x_{2i-1}^{M-2-2k}
+a_{M-2-k}^{(M)} x_{2i}^{M-2-2k} \right) \notag \\
&=& x_{2i-1}x_{2i} s'_{i,k},
\quad k=1,\ldots,K-1.
\eea
Thus $q_0\equiv\sum_f q_f$ where
$q_f=x_1 x_2\cdots x_{M-1} q'_f$ and
$q'_f=\prod_{i=1}^K s'_{i,f(i)}$. The characteristic polynomial of the subspace defined in part (b) is $q=\mu q_0$ where
 \bea
\mu=(x_1+x_4+x_M)(x_2+x_5+x_M)\prod_{i=1}^{M-3}
(x_i+x_{i+2}+x_M),
 \eea
see \eqref{eq:neparnoM}. Hence, this subspace will be universal if we can prove that $\braket{q}{p}\ne0$.

By using the expansion \eqref{eq:Pol-p}, we deduce that $\braket{q}{p}=\braket{\mu q_0}{p}=\braket{\mu' q_0}{p}$ where $\mu'$ is obtained from $\mu$ by setting $x_M=0$. After cancelling $x_1 x_2\cdots x_{M-1}$, we obtain $\braket{q}{p}=\braket{q'}{p'}$, where $q'=\mu'q'_0$, $q'_0=\sum_f q'_f$ and the polynomial $p'$ is
defined by Eq. \eqref{eq:Pol-p} with $M$ replaced by $M-1$.
Hence $x_M$ has been eliminated, it occurs in neither $p'$
nor $q'$. For $M=11$, we were able to compute $\braket{q'}{p'}$.
\epf

We remark that in the case $M=6$ any subspace spanned by $5$
BSOVs is universal. However, in the case $M=7$ the 6 BSOVs that
we discard cannot be chosen arbitrarily, but there are several
other good choices such as
$(1,5,7),(2,5,7),(1,3,7),(2,4,7),(3,5,7),(4,6,7)$.

It should also be mentioned that, although it is not directly
applicable, our work may shed light on the understanding of the N-representability problem~\cite{Coleman},
where the single particle eigenvalues are invariants under LU.
For the $N=3,M=6$, let us consider a concrete example
of the universal subspace with dimension $5$, spanned by
\begin{equation}
e_{235}, e_{145}, e_{136}, e_{246}, e_{135},
\end{equation}
where $e_{ijk}=\ket{i}\we\ket{j}\we\ket{k}$.

In other words, any three-fermion pure-state with six single particle states is
unitary equivalent to
\begin{equation}
\label{eq:5dimVektor}
\ket{\psi}=a e_{235}+b e_{145}+c e_{136}+d e_{246}+z e_{135},
\end{equation}
and the coefficients can be chosen as
$a,b,c,d\ge0$, $z\in\bC$, and
$\|\psi\|^2=a^2+b^2+c^2+d^2+|z|^2=1$.
Without loss of generality, we can further assume that $a\ge b\ge c$.

The one particle reduced density matrix $\r_1$ of the state
$\r:=\proj{\psi}$, is given by (we choose the normalization $\tr\r_1=3$)
\[
\r_1=
\left(\begin{array}{cccccc}
b^2+c^2+|z|^2 & az & 0 & 0 & 0 & 0\\
az^* & a^2+d^2 & 0 & 0 & 0 & 0\\
0 & 0 & c^2+a^2+|z|^2 & bz & 0 & 0\\
0 & 0 & bz^* & b^2+d^2 & 0 & 0\\
0 & 0 & 0 & 0 & a^2+b^2+|z|^2 & cz \\
0 & 0 & 0 & 0 & cz^* & c^2+d^2
\end{array}\right).
\]
We have $\r_1=R_a\oplus R_b\oplus R_c$, a direct sum of three
$2\times2$ diagonal blocks. For $x=a,b,c$ let $D_x=\det R_x$.

Since $\tr R_x=1$ and $R_x\ge0$, we have $D_x\in[0,1/4]$ and the
eigenvalues of $R_x$ can be written as $\lambda_x$ and
$1-\lambda_x$ with $\lambda_x=(1+\sqrt{1-4D_x})/2\in[1/2,1]$.
Let us denote the eigenvalues of $\r_1$ arranged in
decreasing order as
$\lambda_1\ge\lambda_2\ge\cdots\ge\lambda_5\ge\lambda_6$.
Then $\lambda_i$ and $\lambda_{7-i}$ are the eigenvalues of the
same block $R_x$ of $\r_1$. Thus the following result
is a direct corollary of Proposition~\ref{pp:BSOV}.
\begin{corollary}
If the $\lambda_i$s are arranged in decreasing order,
then $\lambda_i+\lambda_{7-i}=1$ for $i=1,2,3$.
\end{corollary}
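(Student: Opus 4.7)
\begin{proof}
By Proposition \ref{pp:BSOV}(a) applied to $M=6$, any $3$-vector $\ket{\psi}\in\we^3 V$ is LU-equivalent to a state of the canonical form \eqref{eq:5dimVektor}. Since local unitary transformations on the single-particle space conjugate the one-particle reduced density matrix by a unitary, the spectrum of $\r_1$ is an LU-invariant. Thus it suffices to establish the claim for a state of the form \eqref{eq:5dimVektor}.

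For such a state the matrix $\r_1$ displayed in the excerpt is block-diagonal, $\r_1=R_a\oplus R_b\oplus R_c$, where each block is $2\times 2$. A direct inspection of the diagonal entries shows $\tr R_a=\tr R_b=\tr R_c=a^2+b^2+c^2+d^2+|z|^2=\|\psi\|^2=1$, consistently with $\tr\r_1=3$. Consequently, the two eigenvalues of each block $R_x$ (with $x=a,b,c$) sum to $1$. Writing them as $\lambda_x$ and $1-\lambda_x$ with $\lambda_x\in[1/2,1]$, the spectrum of $\r_1$ is the multiset
\[
\{\lambda_a,\lambda_b,\lambda_c,\,1-\lambda_a,1-\lambda_b,1-\lambda_c\}.
\]

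Reorder the triple $(\lambda_a,\lambda_b,\lambda_c)$ in decreasing order as $\mu_1\ge\mu_2\ge\mu_3\ge 1/2$. Then $1-\mu_1\le 1-\mu_2\le 1-\mu_3\le 1/2\le\mu_3$, so arranging the six eigenvalues of $\r_1$ in decreasing order gives
\[
\lambda_1=\mu_1,\quad\lambda_2=\mu_2,\quad\lambda_3=\mu_3,\quad
\lambda_4=1-\mu_3,\quad\lambda_5=1-\mu_2,\quad\lambda_6=1-\mu_1.
\]
Therefore $\lambda_i+\lambda_{7-i}=1$ for $i=1,2,3$, as claimed.
\end{proof}

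The only delicate point worth double-checking is the claim that each block $R_x$ has trace equal to $1$: this follows at once by reading off the diagonal entries of $\r_1$ from the excerpt and using $\|\psi\|^2=1$, so there is no genuine obstacle. The argument uses Proposition \ref{pp:BSOV} only for the reduction to canonical form; the remainder is a purely elementary computation.
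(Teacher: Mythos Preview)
Your proof is correct and follows the same approach as the paper: reduce to the canonical form via Proposition~\ref{pp:BSOV}, observe that $\r_1$ is block-diagonal with each $2\times2$ block having trace $1$, and conclude that the sorted eigenvalues pair up as $(\lambda_i,\lambda_{7-i})$ coming from the same block. The paper states this as an immediate corollary and simply asserts that $\lambda_i$ and $\lambda_{7-i}$ are the eigenvalues of the same block $R_x$; you have spelled out the ordering argument (that the three $\mu_j\ge1/2$ dominate the three $1-\mu_j\le1/2$) that justifies this assertion, which is a welcome addition of detail rather than a different method.
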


This hence gives an alternative proof for the N-representability
equalities and inequalities in the case
$N=3,M=6$~\cite{BD72,Kly05,Beth07,Kly09}.

\section{Universal Subspaces for $N>3$}
\label{sec:USNgeq3}

In this section we consider the generalization to $N>3$.
Note that for $N$ even the total number of configurations is $2K\choose N$, and that the number of basic single occupancy states is
$2^N{K\choose N}$. Consider the ratio
$r(N,K)={2K\choose N}/2^N{K\choose N}$.
We know that for any finite $N,K$, we have $r>1$. If we fix $N$,
we will have $\lim_{K\rightarrow\infty}r(N,K)=1$.
This means that the single occupancy states have measure $1$ for large $K$, so one may hope that a similar result as for $N=3$
might hold for $N>3$ at least when $K$ is large. However we show that this is not the case.

\subsection{The SOV-subspace is not universal for $N>3$}

 \bpp
Let $M$ and $N$ be integers such that $M\ge2N\ge8$. Let $V$ be a
complex Hilbert space of dimension $M$. Then there exist
$\ket{\psi}\in\wedge^N V$ which are not equivalent to any SOV.
 \epp
 \bpf
Write $M=2K$ if $M$ is even and $M=2K+1$ if $M$ is odd.
We fix an o.n. basis $\{\ket{i}: 1\le i\le M\}$ of $V$.
Let $V_i=\lin\{\ket{2i-1},\ket{2i}\}$, $i=1,\ldots,K$, and let
$V_{K+1}$ be the 1-dimensional subspace spanned by $\ket{M}$ if
$M$ is odd. Then the SOV-subspace, $\cS$, of $\wedge^N V$ is
given by
\bea \label{eq:DefW}
\cS=\sum_{1\le i_1<\cdots<i_N\le M}
V_{i_1}\wedge V_{i_2}\wedge\cdots\wedge V_{i_N}.
\eea

Let $G:=\GL(V)$ and let $f:G\times\cS\to\wedge^N V$ be the map
sending $(A,\ket{\psi})\to A\cdot\ket{\psi}$. Let $H$ be the subgroup of $G$ which leaves invariant each of the subspaces
$V_i$. Since $\cS$ is $H$-invariant, we can form the algebraic homogeneous vector bundle $G*_H W$ with projection map
$p:G*_H \cS\to G/H$ (see \cite[section 4.8]{PopVin}). As the dimension of $G*_H \cS$ is
$D:=\dim G+\dim \cS-\dim H$, we have
\bea
 \label{ea:D=even}
D &=& 4K(K-1)+2^N\binom{K}{N},\quad (M~{\rm even}); \notag\\
D &=& 4K^2+2^N\binom{K}{N}+2^{N-1}\binom{K}{N-1},\quad (M~{\rm
odd}). \eea
Since $f$ factorizes through the canonical map $G\times
\cS\to G*_H \cS$, we have $\dim G\cdot \cS\le D$. One can verify that
$D<\binom{M}{N}$, see Lemma \ref{le:D}. Thus $G\cdot \cS$ must be
a proper subset of $\wedge^N V$.
 \epf

 \bl
 \label{le:D}
For the dimension $D$ defined by Eq. \eqref{ea:D=even}, we have
$D<\binom{M}{N}$.
 \el
 \bpf
We first prove the assertion for even $M$, i.e., that
\bea  \label{ea:equivalentD}
 f(M,N) := \binom{M}{N}-2^N\binom{K}{N}>4K(K-1)
 \eea
when $M=2K\ge2N\ge8$. We use induction on $N$. One can easily verify $f(M,4)>0$.
Suppose $f(M,N)>0$ holds for some $N$ and
that $M\ge2N+2$. The inductive step follows from the identity
 \bea
 (N+1)\left(f(M,N+1)-f(M,N)\right)=(M-2N-1)f(M,N)
 +N\cdot 2^N \binom{K}{N}.
 \eea
which is easy to verify. Indeed, it implies that
$f(M,N+1)>f(M,N)$.

Next we show the assertion for odd $M=2K+1\ge 2N$ and $N\ge4$. For
$N=4$, we have
 \bea
 D - \binom{M}{4}
 &=&
 4K^2+2^4\binom{K}{4}+2^{3}\binom{K}{3} - \binom{2K+1}{4}
 \notag\\
 &<&
 4K+\binom{2K}{4}+2^{3}\binom{K}{3} - \binom{2K+1}{4}
 =2K(3-K)<0,
 \eea
where the first inequality is from Eq. \eqref{ea:equivalentD}. For
$N>4$, we have
 \bea
 D - \binom{M}{N}
 &=&
 4K^2+2^N\binom{K}{N}+2^{N-1}\binom{K}{N-1} - \binom{2K+1}{N}
 \notag\\
 &<&
 4K^2+(\binom{2K}{N} - 4K(K-1)) + (\binom{2K}{N-1} - 4K(K-1)) - \binom{2K+1}{N}\notag \\
&=&4K(2-K)<0,
 \eea
where the first inequality is from Eq. \eqref{ea:equivalentD}. So
the assertion is true for odd $M$. This completes the proof.
 \epf

To conclude this subsection, we give the lower bound for the
dimension of universal subspaces.

\bpp \label{prp:UnivCond} Let $V$ be a complex Hilbert space of
dimension $M$ and $\{\ket{i}:i=1,2,\ldots,M\}$ an o.n. basis of $V$.
Let $\cW$ be a complex subspace of $\wedge^N V$ spanned by some
basis vectors $\ket{i_1\we i_2\we\cdots\we i_N}$,
$i_1<i_2<\cdots<i_N$. If $\dim \cW<\binom{M}{N}-\binom{M}{2}$ then
$\cW$ is not universal (for the diagonal action of $\Un(M)$).
 \epp
 \bpf
Let $\Un(M)$ be the unitary group of $V$, identified with the group
of unitary matrices of order $M$ by using the basis $\{\ket{i}\}$.
Let $f:\Un(M)\times \cW\to \wedge^N V$ be the restriction of the
action map $\Un(M)\times\wedge^N V\to\wedge^N V$. Since $\cW$ is
spanned by the basis vectors, it is invariant under the action of
the maximal torus $T$ of $\Un(M)$ consisting of the diagonal unitary
matrices. We can form the equivariant vector bundle $\Un(M)*_T \cW$.
The map $f$ factorizes through the canonical map $\Un(M)\times
\cW\to\Un(M)*_T \cW$, and so we have
 \bea
\dim(\Un(M)\cdot \cW) \le \dim(\Un(M)*_T \cW) = M(M-1)+2d,
 \eea
where $d$ is the complex dimension of $\cW$. As
$d<\binom{M}{N}-\binom{M}{2}$, we obtain that $\dim(\Un(M)\cdot
\cW)<2\binom{M}{N}=\dim_\bR(\wedge^N V)$. Hence, $\Un(M)\cdot \cW$
must be a proper subset of $\wedge^N V$, i.e., $\cW$ is not
universal.
 \epf

We do not know that whether the bound $\binom{M}{N}-\binom{M}{2}$ is
sharp for universal spaces of general $N>2,M$. For $N=3$ and $M$
even, the bound is sharp by Proposition \ref{pp:BSOV} (a).

\subsection{BCS states are not LU-equivalent to single occupancy states}

We construct a concrete example that is not LU-equivalent to any
SOV, i.e. the BCS states. The BCS states are named after Bardeen,
Cooper and Schrieffer for there 1957 theory using the states to
build microscopic theory of superconductivity~\cite{BCS57}. Here we
consider the finite dimensional version of the BCS state
$\ket{\ps_{N,M}}$, which is an $N$-fermion state, i.e.
$\ket{\ps_{N,M}}\in\wedge^N V$ with $\dim V=M$, where both $N$ and
$M$ are even (see e.g.~\cite{Yang1962}). The $M$ states are `paired'
in a sense: say as $\{1,2\},\{3,4\},\ldots,\{M-1,M\}$. In each pair,
the two states are either both `occupied' by a pair of fermions or
both `empty'. The BCS state is an equal weight superposition of all
those `paired' states. Intuitively, BCS states are `always double
occupancy', so that they should be among the `hardest' ones to be
transformed by LU to some single occupancy state.

We list a few BCS states:
 \bea
 \label{ea:BCS}
 \ket{\ps_{2,M}}
 &=&
 \sum^{K}_{i=1} \wetw{2i-1}{2i},
 \notag\\  \label{eq:psi4M}
 \ket{\ps_{4,M}}
 &=&
 \sum_{i_1<i_2} \wefo{2i_1-1}{2i_1}{2i_2-1}{2i_2},
 \\
 & \vdots &
 \notag\\
 \ket{\ps_{N,M}}
 &=&
 \sum_{i_1<i_2<\cdots <i_{N/2}}
 \wetw{2i_1-1}{2i_1}
 \we\cdots\we
 \wetw{2i_{N/2}-1}{2i_{N/2}},
 \eea
where the $i_k\in\{1,\ldots,K\}$.

We shall use exterior multiplication and partial inner products with the BCS states, e.g., $\ket{a}\we\ket{\ps_{N,M}}$ or
$\braket{a}{\ps_{N,M}}$ where $\ket{a}\in V$. To simplify this computation, let $U_1,\ldots,U_K$ be $2\times2$ unitary matrices with $\det U_i=1$. So $U=\op^K_{i=1} U_i$ is a $M\times M$ unitary matrix. One can verify that
 \bea
 \label{ea:UandD}
 \we^2 U\wetw{2i-1}{2i} &=& \wetw{2i-1}{2i},
 \eea
where $i=1,\ldots,K$. Hence we obtain the stabilizer formulas
 \bea
 \label{ea:invariantBCS}
 \we^N U \ket{\ps_{N,M}}
 =
 \ket{\ps_{N,M}}.
 \eea
Consequently, in the expressions $\ket{a}\we\ket{\ps_{N,M}}$ and $\braket{a}{\ps_{N,M}}$, after an LU transformation we may assume that $\ket{a}$ is a linear combination of the $\ket{2i-1}$, $i=1,\ldots,K$.

By Lemma \ref{le:N=2SOV}, if $N=2$ then any BCS state is
LU-equivalent to some SOV. In Theorem \ref{thm:BCSnotSOV} we will
show that this is not the case for $N\ge4$. For this purpose we give a general criterion of deciding whether a state is
LU-equivalent to an SOV. We shall denote by
$\r_{12}$ the bipartite reduced density matrix of $\r$.
 \bl
 \label{le:singleoccupancy}
Let $\ket{\ps}\in\we^N V$, $\dim V=M\ge 2N$, be an antisymmetric
state and let $\rho=\proj{\psi}$. Then $\ket{\ps}$ is LU-equivalent to an SOV if and only if there is an o. n. basis
$\ket{a_1},\ldots,\ket{a_M}$ of $V$ such that
$\r_{12}(\ket{a_{2i-1}}\wedge\ket{a_{2i}})=0$
for all $i$ with $2i\le M$.
 \el
 \bpf
{\em Sufficiency}. By replacing $\ket{\ps}$ with an
LU-equivalent state, we may assume that $\ket{a_i}=\ket{i}$ for all $i$. We can write
 \bea
 \label{ea:NM}
 \ket{\ps}
 =
 \sum_{i_1<\cdots<i_N}c_{i_1,\ldots,i_N}
\ket{i_1\we\cdots\we i_N}.
 \eea
If $i_1=2s-1$ and $i_2=2s$, the hypothesis implies that
$\wetw{2s-1}{2s}\in\ker\tr_{3,\cdots,N}
\proj{\ps}=\ker\r_{12}$. So the coefficient
$c_{i_1,\cdots,i_N}=0$.
Similarly, $c_{i_1,\cdots,i_N}=0$ if
$\{2s-1,2s\}\subseteq\{i_1,\ldots,i_N\}$ for some $s$.
So $\ket{\ps}$ is a single occupancy state.

\textit{Necessity}. Suppose $\ket{\ps}$ is LU-equivalent to a single occupancy state $\ket{\ph}$, i.e.,
$\ket{\ps}=\we^N U \ket{\ph}$ with a unitary $U$. By definition of SOV we have $\bra{\ph}(\wetw{2s-1}{2s})=0$. This is equivalent to $(\bra{2s-1}\we\bra{2s})\s(\wetw{2s-1}{2s})=0$, where
$\s=\ketbra{\ph}{\ph}$. By tracing out all but the first two systems, we obtain
$(\bra{2s-1}\we\bra{2s})\s_{12}(\wetw{2s-1}{2s})=0$. It follows that
$\wetw{2s-1}{2s}\in\ker\s_{12}$. Thus, the assertion holds with
$\ket{a_i}=U\ket{i}$ for all $i$. This completes the proof.
 \epf

Note that the proof can be easily extended to the case in which LU is replaced by the diagonal action $\we^N V$ where $V$ is
invertible. Now we prove the main result on BCS states.

  \bt
 \label{thm:BCSnotSOV}
Let $M=2K$ and $N$ be even integers with $K\ge N\ge4$. Then the
BCS state $\ket{\ps_{N,M}}$ is not LU-equivalent to any SOV.
 \et
 \bpf
We start with the case $N=4$. Suppose $\ket{\ps_{4,M}}$ is
LU-equivalent to an SOV. By Lemma \ref{le:singleoccupancy}, there
is a nonzero decomposable 2-vector $\ket{\ph}=\wetw{a}{b}$ such
that the partial inner product $\braket{\ph}{\ps_{4,M}}=0$. By
the simplification mentioned beneath Eq. \eqref{ea:invariantBCS}, we may assume that
$\ket{a}=\sum^K_{j=1}a_{2j-1}\ket{2j-1}$, $a_1=1$, and
$\ket{b}=\sum^{M}_{j=2}b_j\ket{j}$.
As a special case of formula \eqref{eq:PartInnPr} we have
 \bea \label{eq:SpecFormula}
\braket{a\wedge b}{v_1\we v_2\we v_3\we v_4}=
\sum_{1\le i<j\le4} (-1)^{i+j-1}
\braket{a\wedge b}{v_i\we v_j}\ket{v_k\we v_l},
 \eea
where $\{i,j,k,l\}=\{1,2,3,4\}$ and $k<l$.
Recall from subsection \ref{se:UnivSO} that
$\overline{i}=\{i,i-1\}$ when $i$ is even and
$\overline{i}=\{i,i+1\}$when $i$ is odd. For convenience, we
shall write $\overline{i}<\overline{j}$ if $i<j$ and
$\overline{i}\ne\overline{j}$.
By using the formulae in Eqs. \eqref{eq:SpecFormula}
and \eqref{eq:psi4M}, we conclude that
 \bea \label{eq:Uslov}
\braket{a\wedge b}{i\we j}=0 \quad \text{if}\quad
\overline{i}<\overline{j},
 \eea
where $i,j\in\{1,\ldots,M\}$.
By setting $i=1$ in Eq. \eqref{eq:Uslov}, we conclude that
$b_j=0$ for $j>2$ and so $\ket{b}\propto\ket{2}$. Next, by
setting $i=2$ in Eq. \eqref{eq:Uslov}, we conclude that also
$a_j=0$ for $j>1$ and so $\ket{a\we b}\propto\ket{1\we 2}$.
As $\braket{1\we 2}{\ps_{4,M}}\ne0$, we have a contradiction.
Thus $\ket{\ps_{4,M}}$ is not LU-equivalent to an SOV.

Next, we use the induction on $N$ to show that there
is no nonzero decomposable 2-vector $\ket{\ph}=\ket{a\wedge b}$
such that the partial inner product $\braket{\ph}{\ps_{N,M}}=0$. By Lemma \ref{le:singleoccupancy}, this implies that the BCS state $\ket{\ps_{N,M}}$ is not LU-equivalent to any SOV.

Since we have already verified the first case, $N=4$, let us
assume that the assertion holds for $\ket{\ps_{N-2,M}}$ with
$N\ge6$ and $M\ge2(N-2)$. Suppose there is a nonzero decomposable 2-vector $\ket{\ph}=\ket{a\wedge b}$ such that
$\braket{\ph}{\ps_{N,M}}=0$ with $M\ge2N$. Let
 \bea
 \ket{a} &=& \sum^{M}_{j=1}a_j\ket{j} := \ket{\a} +
 \sum^{M}_{j=M-1}a_j\ket{j},
 \\
 \ket{b} &=& \sum^{M}_{j=1}b_j\ket{j} := \ket{\b} +
 \sum^{M}_{j=M-1}b_j\ket{j}.
 \eea
Since $\braket{\ph}{\ps_{N,M}}=0$, we have
 \bea
 0&=&
 \bra{M-1}\we\bra{M}(\braket{\ph}{\ps_{N,M}})
 =
 \bra{\ph}\bigg( (\bra{M-1}\we\bra{M})\ket{\ps_{N,M}} \bigg)
 \notag\\
 &=&
 \braket{\ph}{\ps_{N-2,M-2}}
 =\braket{\a\we\b}{\ps_{N-2,M-2}}.
 \eea
As $M-2>2(N-2)$, the induction hypothesis implies that
$\ket{\a\wedge\b}=0$. Thus $\ket{\a}$ and $\ket{\b}$ are linearly dependent. Consequently, we may assume that $\ket{\b}=0$, and moreover that $\ket{b}=\ket{M}$ and that all $a_{2i}=0$. Then by expanding the left hand side of
$\braket{a\wedge b}{\psi_{N,M}}=0$, we obtain a contradiction similarly as in case $N=4$. So there is no nonzero decomposable 2-vector $\ket{\ph}$ such that $\braket{\ph}{\ps_{N,M}}=0$. This completes the proof by induction.
 \epf

\section{Summary and Discussion}
\label{sec:Sum}

We have discussed universal subspaces for local unitary groups of fermionic systems. We have shown that for $N=3$, the SOV-subspace is universal. Furthermore, for $M$ even, we can always find a universal subspace, contained in the SOV-subspace, whose
dimension is equal to the lower bound $M(M-1)(M-5)/6$. Although our main tool is a natural application of Theorem $4.2$ in~\cite{ad10}, which can be used in small dimensions to construct universal subspaces by computers, the analytical proof we obtained for the general case is far from trivial. In fact, some special features of polynomials are used, which do not generalize to settle the odd $M$ case.

We have also shown that, for $N>3$, not all fermionic states are
LU-equivalent to a single occupancy state. Our argument is based on
dimension counting. For $M$ even, we give BCS states as concrete
examples that are not LU equivalent to any single occupancy state.
This is intuitive as BCS states are always paired so they are most
unlikely to be transformed with LU to something unpaired. Given that
for fixed $N$ almost all states are single occupancy in the large
$M$ limit, the BCS states are also among the `measure zero' states.

We wish our results shed light on further study of entanglement
properties on fermionic system, as well as other properties of
fermionic states such as the N-representability problem (as we
discussed for the $N=3,M=6$ case). We also leave some open
questions. One of them is to understand the achievability of the
dimension lower bound for $N=3$ and $M$ odd, or for even the case of
general $N>2,M$, which is worth further investigation.

\section*{Acknowledgments}

We thank Markus Grassl, Zhengfeng Ji, and Mary Beth Rusaki
for helpful discussion on this paper. LC was
mainly supported by MITACS and NSERC. The CQT is funded by the
Singapore MoE and the NRF as part of the Research Centres of
Excellence programme.
JC is supported by NSERC.
DD was supported in part by an NSERC Discovery Grant. His
computations using Singular were made possible by the
facilities of the Shared Hierarchical Academic Research Computing
Network (SHARCNET).
BZ is supported by NSERC and CIFAR.

\end{document}